\newtheorem{theorem}{Theorem}[section]
\newtheorem{lemma}[theorem]{Lemma}
\newtheorem{definition}[theorem]{Definition}
\newcommand{\R}{\mathbb{R}}
\newcommand{\<}{\langle}
\renewcommand{\>}{\rangle}
\newcommand{\sgn}{\textrm{sgn}}
\newcommand{\I}{\mathcal{I}}
\renewcommand{\P}{\operatorname{\mathbb{P}}}
\newcommand{\PP}{\mathcal{P}}
 \newcommand{\PGj}{\mathcal{P}_{\Gamma_j}}
\newcommand{\PG}{\mathcal{P}_\Gamma}
\newcommand{\PT}{\mathcal{P}_T}
\newcommand{\PTp}{\mathcal{P}_{T^\perp}}
\newcommand{\POzero}{\mathcal{P}_{\Omega_0}}
\newcommand{\supp}[1]{\operatorname{supp}(#1)}
\numberwithin{equation}{section}
\def \endprf{\hfill {\vrule height6pt width6pt depth0pt}\medskip}
\newenvironment{proof}{\noindent {\bf Proof} }{\endprf\par}
\title{Compressed Sensing and Matrix Completion with Constant Proportion of Corruptions}
\author{ Xiaodong Li\\
  \vspace{-.1cm}\\
  Department of Mathematics, Stanford University, Stanford, CA 94305\\
}
\date{}
\begin{document}

\maketitle

\vspace{-0.3in}

\begin{abstract}
  In this paper we improve existing results in the field of compressed sensing and
  matrix completion when sampled data may be grossly corrupted. We
  introduce three new theorems. 1) In compressed sensing, we show that
  if the $m \times n$ sensing matrix has independent Gaussian entries,
  then one can recover a sparse signal $x$ exactly by tractable
  $\ell_1$ minimization even if a positive fraction of the
  measurements are arbitrarily corrupted, provided the number of
  nonzero entries in $x$ is $O(m/(\log (n/m)+1))$. 2) In the very
  general sensing model introduced in \cite{CP10} and assuming a
  positive fraction of corrupted measurements, exact recovery still
  holds if the signal now has $O(m/(\log^2 n))$ nonzero entries. 3)
  Finally, we prove that one can recover an $n \times n$ low-rank
  matrix from $m$ corrupted sampled entries by tractable optimization
  provided the rank is on the order of $O(m/(n\log^2 n))$; again, this
  holds when there is a positive fraction of corrupted samples.
\end{abstract}

{\bf Keywords.} Compressed Sensing, Matrix Completion, Robust PCA,
Convex Optimization, Restricted Isometry Property, Golfing Scheme.

\maketitle

\section{Introduction}

\subsection{Introduction on Compressed Sensing with Corruptions}
Compressed sensing (CS) has been well-studied in recent years
\cite{CRT06,Donoho06}. This novel theory asserts that a sparse or
approximately sparse signal $x\in \mathbb{R}^n$ can be acquired by
taking just a few non-adaptive linear measurements. This fact has
numerous consequences which are being explored in a number of
fields of applied science and engineering. In CS, the acquisition
procedure is often represented as $y=Ax$, where $A \in \mathbb{R}^{m
  \times n}$ is called the sensing matrix and $y \in \mathbb{R}^m$ is
the vector of measurements or observations. It is now well-established
that the solution $\hat{x}$ to the optimization problem
\begin{equation}
\min_{\tilde{x}} \|\tilde{x}\|_1 \text{~~such that~~} A\tilde{x}=y,
\end{equation}
is guaranteed to be the original signal $x$ with high probability,
provided $x$ is sufficiently sparse and $A$ obeys certain conditions. A
typical result is this: if $A$ has iid Gaussian entries, then exact
recovery occurs provided $\|x\|_0\leq C m/(\log(n/m)+1)$ \cite{CRT06CPAM, Donoho06CPAM, RV06} for some positive numerical constant
$C > 0$. Here is another example, if $A$ is a matrix with rows
randomly selected from the DFT matrix, the condition becomes $\|x\|_0
\leq C m/\log n$ \cite{CRT06}. \\
\\
This paper discusses a natural generalization of CS, which we shall refer to as
{\em compressed sensing with corruptions}.  We assume that some
entries of the data vector $y$ are totally corrupted but we have
absolutely no idea which entries are unreliable. We still want to
recover the original signal efficiently and accurately.  Formally, we
have the mathematical model
\begin{equation}
\label{encoding with corruptions}
y=Ax+f=[A, I]\begin{bmatrix} x\\ f \end{bmatrix},
\end{equation}
where $x \in \mathbb{R}^n$ and $f \in \mathbb{R}^m$. The number of
nonzero coefficients in $x$ is $\|x\|_0$ and
similarly for $f$. As in the above model, $A$ is an $m \times n$
sensing matrix, usually sampled from a probability distribution. The
problem of recovering $x$ (and hence $f$) from $y$ has been recently studied in the literature in connection with  some interesting applications. We discuss a few of them.
\begin{itemize}

\item {\em Clipping.} Signal clipping frequently appears because of
  nonlinearities in the acquisition device \cite{LBDB11, SKPB11}. Here, one typically measures $g(Ax)$ rather than
  $Ax$, where $g$ is always a nonlinear map. Letting $f=g(Ax)-Ax$, we
  thus observe $y =Ax + f$. Nonlinearities usually occur at large
  amplitudes so that for those components with small amplitudes, we
  have $f = g(Ax)- Ax = 0$. This means that $f$ is sparse and,
  therefore, our model is appropriate. Just as before, locating the
  portion of the data vector that has been clipped may be difficult
  because of additional noise.

\item {\em CS for networked data.} In a sensor
  network, different sensors will collect measurements of the same
  signal $x$ independently (they each measure $z_i=\langle a_i,
  x\rangle$) and send the outcome to a center hub for analysis
  \cite{HBRN08, NT11}. By setting $a_i$ as the row vectors of
  $A$, this is just $z=Ax$. However, typically some sensors will fail to
  send the measurements correctly, and will sometimes report totally
  meaningless measurements. Therefore, we collect $y=Ax+f$, where $f$
  models recording errors.
\end{itemize}

There have been several theoretical papers investigating the exact
recovery method for CS with corruptions \cite{WM10, LWW10, LDB09, NT11, SKPB11}, and all of them consider
the following recovery procedure in the noiseless case:
\begin{equation}
\label{decoding 2} 
\min_{\tilde{x}, \tilde{f}} \|\tilde{x}\|_1+\lambda(m, n)\|\tilde{f}\|_1 \text{~~such that~~}
A\tilde{x}+\tilde{f}=[A, I]\begin{bmatrix} \tilde{x}\\ \tilde{f}
\end{bmatrix}=y.
\end{equation}
We will compare them with our results in Section 1.4.\\

\subsection{Introduction on matrix completion with corruptions}

Matrix completion (MC) bears some similarity with CS. Here, the goal is to recover a low-rank matrix $L\in
\mathbb{R}^{n\times n}$ from a small fraction of linear measurements.
For simplicity, we suppose the matrix is square as above (the general
case is similar).  The standard model is that we observe $\PP_O(L)$ where $O \subset [n]\times [n]:=\{1,...,n\} \times \{1,...,n\}$ and
\[
\PP_O(L)_{ij}=
\begin{cases}
L_{ij} \text{~~if~~}(i,j)\in O;\\
0 \text{~~otherwise}.\\
\end{cases}
\]
The problem
is to recover the original matrix $L$, and there have been many papers
studying this problem in recent years, see \cite{RFP10, CR09, CT10, KMO10, Gross09}, for example. Here one
minimizes the nuclear norm --- the sum of all the singular values \cite{Fazel02}---
to recover the original low rank matrix.  We discuss
below an improved result due to Gross
\cite{Gross09} (with a slight difference).\\
\\
Define $O \sim \text{Ber}(\rho)$ for some $0<\rho<1$ by meaning that $1_{\{(i,j) \in O\}}$ are iid Bernoulli random variables with parameter $\rho$. Then the solution to
\begin{equation}
\label{L decoding} 
\min_{\widetilde{L}} \|\widetilde{L}\|_* \text{~~such that~~} \PP_O (\widetilde{L})=\PP_O(L),
\end{equation}
is guaranteed to be exactly $L$ with high probability, provided $\rho \geq {{C_\rho r \mu \log^2 n}\over n}$.
Here, $C_\rho$ is a positive numerical constant, $r$ is the rank of $L$, and $\mu$ is an incoherence parameter introduced  in \cite{CR09} which is only dependent of $L$.\\
\\
This paper is concerned with the situation in which some entries may
have been corrupted. Therefore, our model is that we observe
\begin{equation}
\label{L encoding with corruptions}
\PP_O(L)+S,
\end{equation}
where $O$ and $L$ are the same as before and $S \in \R^{n \times n}$ is supported on $\Omega \subset O$. Just as in
CS, this model has broad applicability. For example,
Wu et al.~used this model in photometric stereo \cite{WGSMWM10}. This
problem has also been introduced in \cite{CLMW} and is related to
recent work in separating a low-rank from a sparse component
\cite{CSPW09, CLMW, HKZ11, CSPW09IFAC, XCS10}. A typical result is that the solution
$(\widehat{L}, \widehat{S})$ to
\begin{equation}
\label{decoding 3} 
\min_{\widetilde{L}, \widetilde{S}} \|\widetilde{L}\|_*+\lambda(m, n)\|\widetilde{S}\|_1 \text{~~such that~~}
\PP_O(\widetilde{L})+\widetilde{S}=\PP_O(L)+S,
\end{equation}
is guaranteed to be the true pair $(L, S)$ with high probability under some assumptions about $L, O, S$ \cite{CLMW, CJSC11}. We will compare them with our result in Section 1.4.

\subsection{Main results}

This section introduces three models and three corresponding recovery
results. The proofs of these results are deferred to Section 2 for
Theorem 1.1, Section 3 for Theorem 1.2 and Section 4 for Theorem 1.3.

\subsubsection{CS with iid matrices [Model 1]}

\begin{theorem}
\label{thm 1.1} 
Suppose that $A$ is an
$m \times n$ $(m<n)$ random matrix whose entries are iid Gaussian variables
with mean $0$ and variance $1/m$, the signal to acquire is $x \in \R^n$, and our observation is $y=Ax+f+w$ where $f, w \in \R^m$ and $\|w\|_2\leq \epsilon$.
Then by choosing $\lambda(n,m)={1\over{\sqrt{\log (n/m)+1}}}$, the solution $(\hat{x}, \hat{f})$ to
\begin{equation}
\label{decoding 1}
\min_{\tilde{x}, \tilde{f}} \|\tilde{x}\|_1+\lambda \|\tilde{f}\|_1 \text{~~such that~~}
\|(A\tilde{x}+\tilde{f})-y)\|_2 \leq \epsilon
\end{equation}
satisfies  $\|\hat{x}-x\|_2+\|\hat{f}-f\|_2\leq K\epsilon$ with
probability at least $1-C\exp(-cm)$. This holds universally; that is
to say, for all vectors $x$ and $f$ obeying $\|x\|_0\leq \alpha
m/(\log (n/m)+1)$ and $\|f\|_0\leq \alpha m$. Here $\alpha$, $C$, $c$ and $K$ are numerical constants.
\end{theorem}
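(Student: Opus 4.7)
The plan is to reformulate the program as standard compressed sensing with the augmented dictionary $B = [A, I] \in \R^{m \times (n+m)}$ acting on the augmented signal $z = (x, f)$. The weighted $\ell_1$ objective $\|\tilde x\|_1 + \lambda\|\tilde f\|_1$ then makes the problem into an asymmetric $\ell_1$ recovery on two blocks with very different allowed sparsity levels $s_1 = \alpha m/(\log(n/m)+1)$ and $s_2 = \alpha m$. I would prove Theorem 1.1 by first establishing an extended (two-block) restricted isometry property for $B$, and then running the standard RIP-based noise-resilient analysis, reweighted to match the two scales.

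The main technical step is the following extended RIP: with probability at least $1 - C e^{-cm}$, every $h_x$ with $\|h_x\|_0 \le 2s_1$ and every $h_f$ with $\|h_f\|_0 \le 2s_2$ satisfy
\[
(1-\delta)(\|h_x\|_2^2 + \|h_f\|_2^2) \;\le\; \|Ah_x + h_f\|_2^2 \;\le\; (1+\delta)(\|h_x\|_2^2 + \|h_f\|_2^2).
\]
For a single fixed pair this follows by expanding $\|Ah_x + h_f\|_2^2 = \|Ah_x\|_2^2 + 2\langle Ah_x, h_f\rangle + \|h_f\|_2^2$: the $\chi^2$ tail for $\|Ah_x\|_2^2$ and the Gaussian tail for the cross term (whose variance is $\|h_x\|_2^2\|h_f\|_2^2/m$) together give concentration at rate $e^{-c'\delta^2 m}$. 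One then union-bounds over $T_1 \subset [n]$ of size $2s_1$, $T_2 \subset [m]$ of size $2s_2$, and an $\epsilon_0$-net on the unit sphere of the restricted subspace; the combinatorial cost $\binom{n}{2s_1}\binom{m}{2s_2}(3/\epsilon_0)^{2s_1+2s_2}$ is at most $e^{C''\alpha\log(1/\alpha)\cdot m}$, precisely because the factor $1/(\log(n/m)+1)$ in $s_1$ kills the $\log(n/s_1)$ factor coming from the $A$-block count. A sufficiently small universal choice of $\alpha$ then beats the concentration exponent.

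Armed with extended RIP, set $h_x = \hat x - x$, $h_f = \hat f - f$, $T_1 = \supp{x}$, $T_2 = \supp{f}$. Feasibility of $(x,f)$ and $(\hat x, \hat f)$ forces $\|Ah_x + h_f\|_2 \le 2\epsilon$, and optimality of the weighted $\ell_1$ objective gives the cone condition $\|h_{x,T_1^c}\|_1 + \lambda\|h_{f,T_2^c}\|_1 \le \|h_{x,T_1}\|_1 + \lambda\|h_{f,T_2}\|_1$. The decisive identity is $\sqrt{s_1} = \lambda\sqrt{s_2}$, which is exactly what the choice $\lambda = 1/\sqrt{\log(n/m)+1}$ enforces. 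After Cauchy-Schwarz the right-hand side of the cone becomes $\sqrt{s_1}(\|h_{x,T_1}\|_2 + \|h_{f,T_2}\|_2) \le \sqrt{2s_1}\,\sqrt{\|h_{x,T_1}\|_2^2 + \|h_{f,T_2}\|_2^2}$, so the two blocks enter the joint $\ell_2$ norm symmetrically.

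Partitioning $T_1^c$ into blocks $T_{1,k}$ of size $s_1$ and $T_2^c$ into blocks $T_{2,k}$ of size $s_2$, both sorted by decreasing magnitude, the standard sorting estimate gives $\sum_{k\ge 2}(\|h_{x,T_{1,k}}\|_2 + \|h_{f,T_{2,k}}\|_2) \le (\|h_{x,T_1^c}\|_1 + \lambda\|h_{f,T_2^c}\|_1)/\sqrt{s_1} \le \sqrt{2}\sqrt{\|h_{x,T_1}\|_2^2 + \|h_{f,T_2}\|_2^2}$, using the cone and the key identity. Testing $Bh$ against $B$ applied to the first two blocks and using extended RIP for the main term along with its polarized form $|\langle B\alpha, B\beta\rangle| \le \delta\|\alpha\|_2\|\beta\|_2$ on disjoint-support pairs of joint sparsity $(s_1, s_2)$ for the cross terms then yields $\|(h_x, h_f)\|_2 \le K\epsilon$, and hence $\|\hat x - x\|_2 + \|\hat f - f\|_2 \le \sqrt{2}K\epsilon$ as claimed. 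The hardest step is the extended RIP: one has to control the two very different column geometries of $A$ (dense Gaussian) and $I$ (coordinate vectors) simultaneously, and the delicate combinatorial balance $s_1\log(n/s_1) \asymp s_2 \asymp \alpha m$ is exactly what forces the specific sparsity bound stated in the theorem.
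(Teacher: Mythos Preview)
Your proposal is correct and follows essentially the same approach as the paper: establish a two-block restricted isometry property for $[A,I]$ with parameters $(2s_1,2s_2)$, then run the standard cone-plus-shelling argument with the weight $\lambda$ balancing the two blocks. The only minor differences are that the paper proves the extended RIP by citing the known Gaussian RIP for the $\|Ax\|_2^2$ term and bounding the cross term via a Davidson--Szarek operator-norm estimate on $P_V A P_T$ (union-bounded over supports), rather than a single net argument, and that the paper allows $\lambda\in[\tfrac12\sqrt{s_1/s_2},\,2\sqrt{s_1/s_2}]$ since the floor functions prevent your identity $\sqrt{s_1}=\lambda\sqrt{s_2}$ from holding exactly.
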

In the above statement, the matrix $A$ is random. Everything else is
deterministic. The reader will notice that the number of nonzero
entries is on the same order as that needed for recovery from clean
data \cite{CRT06CPAM, Donoho06, BDDW07, RV06}, while the condition of $f$ implies
that one can tolerate a constant fraction of possibly adversarial
errors. Moreover, our convex optimization is related to LASSO \cite{Tib96} and Basis Pursuit \cite{CDS98}.

\subsubsection{CS with general sensing matrices [Model 2]} 

In this model, $m<n$ and
\[
A={1\over{\sqrt{m}}}\begin{pmatrix}a_1^*\\ ...\\ a_m^*\end{pmatrix},
\]
where $a_1,..., a_m$ are $n$ iid copies of a random vector $a$ whose
distribution obeys the following two properties: 1)
$\mathbb{E}aa^*=I$; 2) $\|a\|_\infty\leq \sqrt{\mu}$. This model has
been introduced in \cite{CP10} and includes a lot of the stochastic
models used in the literature. Examples include partial DFT matrices,
matrices with iid entries, certain random convolutions \cite{Romberg09} and so on. 
\\
\\
In this model, we assume that $x$ and $f$ in \eqref{encoding with
  corruptions} have fixed support denoted by $T$ and $B$, and with
cardinality $|T|=s$ and $|B|=m_b$. In the remainder of the paper,
$x_T$ is the restriction of $x$ to indices in $T$ and $f_B$ is the
restriction of $f$ to $B$. Our main assumption here concerns the sign
sequences: the sign sequences of $x_T$ and $f_B$ are independent of
each other, and each is a sequence of symmetric iid $\pm 1$ variables. 

\begin{theorem}
 \label{thm 1.2} 
 For the model above, the solution $(\hat{x}, \hat{f})$ to
 \eqref{decoding 2}, with $\lambda(n,m)=1/\sqrt{\log
   n}$, is exact with probability at least $1-Cn^{-3}$, provided that
 $s \leq \alpha{m\over{\mu \log^2 n}}$ and $m_b\leq
 \beta{m\over{\mu}}$. Here $C$, $\alpha$ and $\beta$ are some
 numerical constants.
\end{theorem}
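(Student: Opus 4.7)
The plan is to follow the dual certificate approach of \cite{CP10}, combined with the golfing scheme of Gross \cite{Gross09}, applied to the augmented sensing matrix $\tilde A = [A, I]$. By standard subgradient calculus, to certify that $(x,f)$ is the unique optimum of \eqref{decoding 2} it suffices to produce a vector $\nu \in \R^m$ obeying the (inexact) dual conditions
\begin{align*}
\|(A^*\nu)_T - \sgn(x_T)\|_2 &\leq 1/4, & \|(A^*\nu)_{T^c}\|_\infty &\leq 1/2,\\
\|\nu_B - \lambda\sgn(f_B)\|_2 &\leq \lambda/4, & \|\nu_{B^c}\|_\infty &\leq \lambda/2,
\end{align*}
together with two preconditioning estimates: the restricted isometry $\|A_T^*A_T - I_T\|\leq 1/2$ and the off-diagonal bound $\|A_{B,T}\|\leq 1/2$, where $A_{B,T}$ denotes the submatrix of $A$ on rows in $B$ and columns in $T$. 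Both follow from matrix Bernstein applied to the iid rows $a_i$, using $\|a\|_\infty \leq \sqrt{\mu}$; the first needs roughly $s \lesssim m/(\mu\log n)$ and the second needs $m_b, s \lesssim m/\mu$, both of which are implied by the hypotheses.

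Next, I would split $\nu = \nu^{(0)} + \nu^{(1)}$, with $\nu^{(0)}$ absorbing the corruption signs and $\nu^{(1)}$ the signal signs. Setting $\nu^{(0)}_B := \lambda\sgn(f_B)$ and $\nu^{(0)}_{B^c} := 0$ satisfies the $B$-conditions exactly. It leaves two residuals: $(A^*\nu^{(0)})_T = \lambda A_{B,T}^*\sgn(f_B)$, which must be subtracted from the target $\sgn(x_T)$ for the golfing stage, and $(A^*\nu^{(0)})_{T^c}$, which must already obey the $T^c$-$\ell_\infty$ bound. The former is controlled by $\|A_{B,T}\|\leq 1/2$; the latter is controlled coordinate-by-coordinate by Hoeffding applied to the independent signs of $f_B$, conditionally on $A$, using the column norm estimate $\|(A_{B, \cdot})_{\cdot,j}\|_2 \lesssim \sqrt{m_b/m}$ together with the choice $\lambda = 1/\sqrt{\log n}$. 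To construct $\nu^{(1)}$, partition $B^c = \Omega_1\cup\cdots\cup\Omega_l$ into $l = \lceil C\log n\rceil$ equal blocks and run the golfing iteration
\[
\nu^{(1,j)}_{\Omega_j} = \tfrac{m}{|\Omega_j|}(A_{\Omega_j})_T\, q_{j-1}, \qquad q_j = q_{j-1} - (A^* \nu^{(1,j)})_T,
\]
starting from $q_0 = \sgn(x_T) - \lambda A_{B,T}^*\sgn(f_B)$ and setting $\nu^{(1)} = \sum_j \nu^{(1,j)}$, each $\nu^{(1,j)}$ supported on $\Omega_j$. A per-block restricted isometry (Candès--Plan) gives the geometric contraction $\|q_j\|_2 \leq \tfrac{1}{2}\|q_{j-1}\|_2$; a per-block Bernstein/Hoeffding estimate, exploiting the independence of $q_{j-1}$ from $A_{\Omega_j}$, delivers the $\ell_\infty$ bounds on $(A^*\nu^{(1,j)})_{T^c}$ and on $\nu^{(1,j)}_{\Omega_j}$ itself. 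After $l = O(\log n)$ iterations these sum to give the remaining certificate conditions.

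The main obstacle is a delicate simultaneous $\ell_\infty$ bookkeeping on $T^c$ and on $B^c$: both $\nu^{(0)}$ and $\nu^{(1)}$ contribute on $T^c$, while the bound on $\nu^{(1)}_{B^c}$ must beat $\lambda/2$ even as $\lambda$ shrinks with $n$. The calibration $\lambda = 1/\sqrt{\log n}$ is precisely what balances these two requirements, at the cost of an extra $\log n$ factor in each golfing block's sparsity budget — each $|\Omega_j| \approx m/\log n$ can only absorb sparsity $O(|\Omega_j|/(\mu\log n)) = O(m/(\mu\log^2 n))$ — which is the source of the $\log^2 n$ in the statement. Summing the Bernstein failure probabilities over $l = O(\log n)$ iterations and the $n$-coordinate $\ell_\infty$ union bound yields the claimed $1-Cn^{-3}$ success probability.
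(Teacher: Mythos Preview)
Your overall architecture matches the paper's --- split the dual vector as $\nu_B=\lambda\sgn(f_B)$ and build $\nu_{B^c}$ by golfing on a partition of $B^c$, starting from $q_0=\sgn(x_T)-\lambda A_{B,T}^*\sgn(f_B)$ --- but the step ``a per-block Bernstein/Hoeffding estimate, exploiting the independence of $q_{j-1}$ from $A_{\Omega_j}$, delivers the $\ell_\infty$ bounds on $\nu^{(1,j)}_{\Omega_j}$'' does not work. Each entry of $\nu^{(1,j)}_{\Omega_j}$ is, up to scaling, a \emph{single} inner product $(a_i)_T^*q_{j-1}$ with one fresh row $a_i$; under the model hypotheses (only $\E aa^*=I$ and $\|a\|_\infty\le\sqrt\mu$) there is no concentration for this quantity, and its worst case $\sqrt{\mu s}\,\|q_{j-1}\|_2$ is far too large. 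The paper handles this by a reversal: it writes this entry as $\frac{\sqrt m}{m_k}w^*\bigl(\sgn(x_T)-\lambda A_{B,T}^*\sgn(f_B)\bigr)$ with $w$ independent of both $\sgn(x_T)$ and $(\sgn(f_B),A_{B,T})$, and then invokes Hoeffding over the \emph{random signs of $x_T$} and Bernstein over $\sgn(f_B)$. You never use the randomness of $\sgn(x_T)$; the paper stresses that this assumption is exactly what makes the $\|q_{B^c}\|_\infty\le\lambda/4$ bound go through.

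A second, related issue is your use of $l\approx C\log n$ \emph{equal} blocks. Even with the random-sign trick, the bound on the $k$th block entry is of order $\frac{m}{m_k}\frac{\sqrt{\log n}}{\sqrt m}\|w\|_2$, and for $k=1$ one has $\|w\|_2\le\sqrt{\mu s}$ with no contraction yet. With $m_1\approx m/\log n$ and $s\le\alpha m/(\mu\log^2 n)$ this gives $\asymp\sqrt{\alpha\log n}$, which cannot be made $\le\lambda/4=1/(4\sqrt{\log n})$ for any fixed $\alpha$. The paper avoids this by taking a \emph{non-uniform} partition: the first two blocks have size $\Theta(m)$ (so $m/m_k=O(1)$ there) and are used to gain an extra $1/\sqrt{\log n}$ contraction each via the stronger restricted isometry $\|\frac{m}{m_j}A_{G_j,T}^*A_{G_j,T}-I\|\le 1/(2\sqrt{\log n})$; the remaining $\Theta(\log n)$ blocks have size $\Theta(m/\log n)$. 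This calibration is what actually produces the $s\lesssim m/(\mu\log^2 n)$ requirement. A minor point: your inexact dual condition $\|(A^*\nu)_T-\sgn(x_T)\|_2\le 1/4$ should be $\le\lambda/4$, and the injectivity you need is that of $A_{B^c,T}$ (equivalently $\|\frac{m}{m-m_b}A_{B^c,T}^*A_{B^c,T}-I\|\le 1/2$), not of $A_T$; otherwise the closing step of the duality lemma fails once $\lambda=1/\sqrt{\log n}$ is small.
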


Above, $x$ and $f$ have fixed supports and random signs. However, by a
recent de-randomization technique first introduced in \cite{CLMW},
exact recovery with random supports and fixed signs would also
hold. We will explain this de-randomization technique in the proof of Theorem 1.3. In some specific models, such as independent rows from the DFT matrix, $\mu$ could be a numerical constant, which implies the proportion of corruptions is also a constant. An open problem is whether Theorem 1.2 still holds in the case where $x$ and $f$ have both fixed supports and signs. Another open problem is to know whether the
result would hold under more general conditions about $A$ as in
\cite{CP09} in the case where $x$ has both random support and random signs.\\
\\
We emphasize that the sparsity condition $\|x\|_0\leq C{m\over{\mu
    \log^2 n}}$ is a little stronger than the optimal result available
in the noise-free literature \cite{CRT06, CP10}),
namely,$\|x\|_0\leq C{m\over{\mu \log n}}$. The extra logarithmic
factor appears to be important in the proof which we will explain in Section 3, and a third open problem
is whether or not it is possible to remove this factor.\\
\\
Here we do not give a sensitivity analysis for the recovery procedure as in Model 1. Actually by applying a similar method introduced in \cite{CP10} to our argument in Section 3, a very good error bound could be obtained in the noisy case. However, technically there is little novelty but it will make our paper very long. Therefore we decide to only discuss the noiseless case and focus on the sampling rate and corruption ratio.

\subsubsection{MC from corrupted entries [Model 3]}
We assume $L$ is of rank $r$ and write its reduced SVD as $L=U\Sigma V^*$, where $U, V \in \mathbb{R}^{n \times r}$ and
$\Sigma \in \mathbb{R}^{r \times r}$.  Let $\mu$ be the
smallest quantity such that for all $1 \le i \le n$, 
\[
\|UU^*e_i\|_2^2\leq {{\mu r}\over n}, \quad \|VV^*e_i\|_2^2\leq {{\mu
    r}\over n}, \quad \text{and } \|UV^*\|_\infty\leq {\sqrt{\mu
    r}\over n}.
\] 
This model is the same as that originally introduced in \cite{CR09},
and later used in \cite{Gross09,Recht09,CT10,CLMW, CJSC11}.  We observe $\PP_O(L)+S$, where $O \in [n]\times [n]$ and $S$ is supported on $\Omega \subset O$. Here we assume that $O, \Omega, S$ satisfy the following model:
\paragraph{Model 3.1:}~\\
1. Fix an $n$ by $n$ matrix $K$, whose entries are either $1$ or $-1$. \\
2. Define $O \sim \text{Ber}(\rho)$ for a constant $\rho$ satisfying $0<\rho<{1 \over 2}$. Specifically speaking, $1_{\{(i,j) \in O\}}$ are iid Bernoulli random variables with parameter $\rho$. \\
3. Conditioning on $(i,j) \in O$, assume that $(i,j) \in \Omega$ are independent events with $\P((i,j) \in \Omega | (i,j) \in O)=s$. This implies that $\Omega \sim \text{Ber}(\rho s)$.\\
4. Define $\Gamma:= O/\Omega$. Then we have $\Gamma \sim \text{Ber}(\rho(1-s))$\\
5. Let $S$ be supported on $\Omega$, and $\sgn(S):=\PP_{\Omega}(K)$.\\
\begin{theorem}
\label{thm 1.3}
Under Model 3.1, suppose $\rho>C_\rho{{\mu r\log^2n}\over{n}}$ and $s \le C_s$. Moreover, suppose $\lambda:={1\over \sqrt{\rho n \log n}}$ and denote $(\hat{L}, \hat{S})$ as the optimal solution to the problem \eqref{decoding 3}. Then we have $(\hat{L}, \hat{S})=(L, S)$ with probability at least $1-Cn^{-3}$ for some numerical constant $C$, provided the numerical constants $C_s$ is sufficiently small and $C_\rho$ is sufficiently large. 
\end{theorem}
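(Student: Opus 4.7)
The plan is to exhibit a dual certificate for the convex program (1.5). Writing down its KKT conditions, $(L, S)$ is the unique optimum as soon as there exists a matrix $Y$ supported on $O$ obeying (i) $\PT Y = UV^*$, (ii) $\|\PTp Y\| < 1/2$, (iii) $\PP_\Omega Y = \lambda\, \sgn(S) = \lambda\, \PP_\Omega(K)$, and (iv) $\|\PP_\Gamma Y\|_\infty < \lambda/2$, together with two injectivity-type estimates (that the tangent space $T$ intersects the set of matrices supported in $\Omega$ trivially, and that $\|\PT - q^{-1}\PT\PGj\PT\|<1/2$ holds on each golfing layer). These injectivity estimates are standard consequences of the sampling hypothesis $\rho \gtrsim \mu r \log^2 n/n$ via a matrix Bernstein inequality, so the crux is the construction of $Y$.

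To build $Y$ I would follow the robust-PCA recipe of \cite{CLMW} and decompose it as $Y = Y^L + Y^S$, with $Y^S$ supported on $\Omega$ and $Y^L$ supported on $\Gamma$. Take $Y^S := \lambda\, \PP_\Omega(K)$ so that (iii) is immediate and $Y^S$ contributes nothing to $\PP_\Gamma Y$. For $Y^L$, decompose $\Gamma \sim \text{Ber}(\rho(1-s))$ as $\Gamma = \Gamma_1 \cup \cdots \cup \Gamma_{j_0}$ with $j_0 \asymp \log n$ conditionally independent Bernoulli layers of parameter $q$ satisfying $1-(1-q)^{j_0} = \rho(1-s)$, and run Gross's golfing iteration
\[
Z_0 = UV^* - \PT Y^S, \qquad Y^L_j = Y^L_{j-1} + q^{-1}\PGj Z_{j-1}, \qquad Z_j = Z_{j-1} - q^{-1}\PT\PGj Z_{j-1},
\]
setting $Y^L := Y^L_{j_0}$. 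The contraction $\|\PT - q^{-1}\PT\PGj\PT\|<1/2$ drives $\|Z_{j_0}\|_F$ below $n^{-10}$, which gives (i) up to a vanishing error absorbed by a standard least-squares correction. Conditions (ii) and (iv) are then verified term by term, estimating $\|\PTp(q^{-1}\PGj - I)\PT H\|$ by matrix Bernstein and $\|q^{-1}\PGj H\|_\infty$ by a Bennett-type bound, in direct analogy with Gross's argument and the MC half of \cite{CLMW}.

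The main obstacle, and the reason $\lambda$ is tuned to $1/\sqrt{\rho n \log n}$, is balancing the sparse and low-rank sides: one needs $\lambda\,\|\PT \PP_\Omega(K)\|$ small enough that $Z_0$ is close to $UV^*$, while $\lambda$ must simultaneously dominate the $\ell_\infty$ norm of the golfing output on $\Gamma$. This balance is exactly what forces the corruption fraction to satisfy $s \le C_s$ for some small numerical constant, and it explains why an extra $\log n$ factor appears relative to the noiseless matrix completion threshold. Bounding $\|\PTp Y^S\|$ and $\|\PT Y^S\|$ requires, at this step, that $\sgn(S)$ behave like a random sign matrix so that matrix Bernstein applies to $\lambda \PP_\Omega(K)$.

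The second, and more delicate, obstacle is that the sign pattern $K$ is fixed rather than random. To handle this I would first prove Theorem~\ref{thm 1.3} under the auxiliary hypothesis that the entries of $K$ are i.i.d.\ symmetric $\pm 1$ variables independent of $O$ and $\Omega$, which is exactly the setting where the Bernstein bounds above go through. I would then invoke the elimination-of-random-signs trick introduced in \cite{CLMW}: a convex-analytic symmetrization argument shows that if (1.5) exactly recovers the pair $(L, S)$ with probability at least $1 - C n^{-3}$ when the signs of $S$ are random Rademacher, then the same bound holds for every deterministic sign pattern $K$, which upgrades the random-sign result to the full statement of Theorem~\ref{thm 1.3}. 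A last technical point, worth flagging before the computation starts, is that $\Omega$ and $\Gamma$ are not independent --- they are two halves of $O$ --- but conditionally on $O$ the split is itself a Bernoulli $(s)$ coin flip on each site of $O$, which restores the conditional independence that the golfing and Bernstein arguments require.
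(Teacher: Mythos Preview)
Your overall plan---split the certificate into a sparse piece $Y^S=\lambda\,\PP_\Omega(\text{signs})$ and a golfed piece $Y^L$ supported on the clean samples, then derandomize the signs---is the right one and matches the paper's strategy. But two of the steps, as you have written them, would not go through to give the theorem as stated.

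\textbf{Independence of $Z_0$ and the golfing layers.} Your last paragraph flags the dependence between $\Omega$ and $\Gamma$ but the fix you propose (``condition on $O$; the split is Bernoulli$(s)$'') does not actually decouple them: given $O$, the sets $\Omega$ and $\Gamma=O\setminus\Omega$ are \emph{complementary}, hence perfectly dependent. Your starting point $Z_0=UV^*-\lambda\PT\PP_\Omega(K)$ is therefore a function of $\Omega$, and any layer $\Gamma_j\subset\Gamma$ is also a function of $\Omega$, so Lemmas~\ref{lemma 4.1}--\ref{lemma 4.3} (which require the Bernoulli set to be independent of the fixed matrix) cannot be invoked. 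Randomizing $K$ first does not help: $(\PP_\Omega W)_{ij}\neq 0$ forces $(i,j)\notin\Gamma$, so the dependence persists. The paper resolves this by passing to an equivalent model (Model~3.2) in which one introduces \emph{independent} auxiliaries $\Gamma'\sim\text{Ber}((1-2s)\rho)$ and $\Omega'\sim\text{Ber}\bigl(\tfrac{2s\rho}{1-\rho+2s\rho}\bigr)$ together with an independent Rademacher matrix $W$, and then golfs on $\Gamma'$ with starting point $Z_0=\PT(UV^*-\lambda\PP_{\Omega'}W)$. Equality in distribution of $(O,\Omega)$ under the two models is what carries the conclusion back to Model~3.1; this coupling \emph{is} the derandomization step, and it simultaneously fixes the independence problem.

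\textbf{Uniform golfing loses a logarithm.} With $j_0\asymp\log n$ equal layers of size $q\asymp\rho/\log n$, the bound on $\|\PP_{\Gamma'}Y\|_\infty$ reads
\[
\|\PP_{\Gamma'}Y\|_\infty \;\le\; \sum_j q^{-1}\|Z_{j-1}\|_\infty \;\lesssim\; q^{-1}\|Z_0\|_\infty \;\asymp\; \frac{\log n}{\rho}\cdot\frac{\sqrt{\mu r}}{n},
\]
and comparing with $\lambda/4\asymp 1/\sqrt{\rho n\log n}$ forces $\rho\gtrsim \mu r\log^3 n/n$, one log worse than the statement. The paper avoids this by taking the first two layers large, $q_1=q_2=(1-2s)\rho/6$, so that Lemma~\ref{lemma 4.2} yields the sharper contractions $\|Z_1\|_\infty\le \tfrac{1}{2\sqrt{\log n}}\|Z_0\|_\infty$ and $\|Z_2\|_\infty\le \tfrac{1}{4\log n}\|Z_0\|_\infty$; the remaining $l-2\asymp\log n$ layers have $q_j\asymp\rho/\log n$. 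This non-uniform schedule makes every term in $\sum_j q_j^{-1}\|Z_{j-1}\|_\infty$ of order $\rho^{-1}\|Z_0\|_\infty$ and recovers the $\log^2 n$ threshold. The same schedule is what saves the log in the operator-norm bound for $\PTp Y$.

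Two smaller remarks. First, the paper avoids your least-squares correction altogether by proving an \emph{inexact} duality lemma (Theorem~\ref{teo:kktdg}) that tolerates $\|\PT Y-Z_0\|_F\le\lambda/n^2$; this is cleaner here because, as the paper notes, the CLMW least-squares construction for the sparse part does not apply directly once $\rho$ is as small as $O(\mu r\log^2 n/n)$. Second, because in Model~3.2 one golfs on $\Gamma'$ while the sparse piece lives on $O\setminus\Gamma'$ (not on $\Omega$), the paper needs an additional symmetrization step---showing $(\Gamma',\PP_{\Omega'}W)$ and $(\Gamma',\,2\PP_{\Omega'\setminus\Gamma'}W-\PP_{\Omega'}W)$ are equidistributed---to pass from the certificate conditions involving $\PP_{\Omega'}W$ to those involving $\PP_{O\setminus\Gamma'}W$.
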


In this model $O$ is available while $\Omega$, $\Gamma$ and $S$ are not known explicitly from the observation $\PP_{O}(L)+S$. By the assumption $O \sim \text{Ber}(\rho)$, we can use $|O|/(n^2)$ to approximate $\rho$. From the following proof we can see that $\lambda$ is not required to be ${1\over \sqrt{\rho n \log n}}$ exactly for the exact recovery. The power of our result is that one can recover a low-rank matrix from a nearly minimal number of samples even when a constant proportion of these samples has been corrupted. \\
\\
We only discuss the noiseless case for this model. Actually by a method similar to \cite{CP09}, a suboptimal estimation error bound can be obtained by a slight modification of our argument. However, it is of little interest technically and beyond the optimal result when $n$ is large. There are other suboptimal results for matrix completion with noise, such as \cite{ANW11}, but the error bound is not tight when the additional noise is small. We want to focus on the noiseless case in this paper and leave the problem with noise for future work.\\
\\
The values of $\lambda$ are chosen for theoretical guarantee of exact recovery in Theorem 1.1, 1.2 and 1.3. In practice, $\lambda$ is usually taken by cross validation.

\subsection{Comparison with existing results, relative works and our
contribution}

In this section we will compare Theorems \ref{thm 1.1}, \ref{thm 1.2} and \ref{thm 1.3}
with existing results in the literature.\\
\\
We begin with Model 1. In \cite{WM10}, Wright and Ma discussed a model where the sensing matrix $A$ has independent columns with common mean $\mu$ and normal perturbations with variance $\sigma^2/m$. They chose $\lambda(m, n)=1$, and proved that
$(\hat{x}, \hat{f})=(x, f)$ with high probability provided $\|x\|_0\leq C_1(\sigma, n/m)m$, $\|f\|_0\leq C_2(\sigma, n/m)m$ and $f$ has random signs. Here $C_1(\sigma, 1/m)$ is much smaller than $C/(\log(n/m)+1)$. We notice that since the authors of  \cite{WM10} talked about a different model, which is motivated by \cite{WYGSM09}, it may not be comparable with ours directly. However, for our motivation of CS with corruptions, we assume $A$ satisfy a symmetric distribution and get better sampling rate.\\
\\
A bit later, Laska et al.~\cite{LDB09} and Li et al.~\cite{LWW10} also
studied this problem. By setting $\lambda(m, n)=1$, both papers
establish that for Gaussian (or sub-Gaussian) sensing matrices $A$, if $m >
C(\|x\|_0+\|f\|_0)\log ((n+m)/(\|x\|_0+\|f\|_0))$, then the recovery
is exact. This follows from the fact that $[A,I]$ obeys a restricted
isometry property known to guarantee exact recovery of sparse vectors
via $\ell_1$ minimization. Furthermore, the sparsity
requirement about $x$ is the same as that found in the standard CS
literature, namely, $\|x\|_0\leq C m/(\log(n/m)+1)$. However, the
result does not allow a positive fraction of
corruptions. For example, if $m=\sqrt{n}$, we have $\|f\|_0/m \leq 2/\log{n}$, which will go to zero as $n$ goes to zero.\\
\\
As for Model 2, an interesting piece of
work \cite{NT11} (and later \cite{NNT11} on the noisy case) appeared during the preparation of this paper. These
papers discuss models in which $A$ is formed by selecting rows from an
orthogonal matrix with low incoherence parameter $\mu$, which is the minimum value such that $n|A_{ij}|^2\leq \mu$ for any $i, j$. The main result states that selecting $\lambda=\sqrt{n/(C \mu m \log n)}$ gives exact
recovery under the following assumptions: 1) the rows of $A$ are
chosen from an orthogonal matrix uniformly at random; 2) $x$ is a
random signal with independent signs and equally likely to be either
$\pm 1$; 3) the support of $f$ is chosen uniformly at random. (By the
de-randomization technique introduced in \cite{CLMW} and used in
\cite{NT11}, it would have been sufficient to assume that the signs of
$f$ are independent and take on the values $\pm 1$ with equal
probability). Finally, the sparsity conditions require $m \geq C \mu^2 \|x\|_0(\log n)^2$ and $\|f\|_0\leq C m$, which are nearly optimal, for the best known sparsity condition when $f = 0$ is $m\geq
C \mu \|x\|_0 \log n$. In other words, the result is optimal up to an
extra factor of $\mu \log n$; the sparsity condition about $f$ is of
course nearly optimal. \\
\\
However, the model for $A$ does not
include some models frequently discussed in the literature such as subsampled tight or continuous frames. Against this background, a recent paper of Cand\`es and Plan \cite{CP10}
considers a very general framework, which includes a lot of common
models in the literature. Theorem 1.2 in our paper is similar to Theorem 1 in \cite{NT11}. It assumes similar sparsity conditions, but is based on this much broader and more applicable model introduced in \cite{CP10}. Notice that, we require $m\geq C \mu \|x\|_0(\log n)^2$
whereas \cite{NT11} requires $m\geq C \mu^2 \|x\|_0(\log
n)^2$. Therefore, we improve the condition by a factor of $\mu$, which
is always at least $1$ and can be as large as $n$. However, our result
imposes $\|f\|_0\leq C m/\mu$, which is worse than $\|f\|_0\leq \gamma
m$ by the same factor. In \cite{NT11}, the parameter $\lambda$ depends
upon $\mu$, while our $\lambda$ is only a function of $m$ and
$n$. This is why the results differ, and we prefer to use a value of
$\lambda$ that does not depend on $\mu$ because in some applications,
an accurate estimate of $\mu$ may be difficult to obtain. In addition, we use different techniques of proof which the clever golfing scheme of \cite{Gross09} is exploited. \\
\\
Sparse approximation is another problem of underdetermined linear system where the dictionary matrix $A$ is always assumed to be deterministic. Readers interested in this problem (which always requires
stronger sparsity conditions) may also want to study the recent paper
\cite{SKPB11} by Studer et al. There, the authors introduce a more
general problem of the form $y=Ax+Bf$, and analyzed the performance of
$\ell_1$-recovery techniques by using ideas which have been
popularized under the name of generalized uncertainty principles in
the basis pursuit and sparse approximation literature.\\
\\
As for Model 3, Theorem \ref{thm 1.3} is a significant extension of the results
presented in \cite{CLMW}, in which the authors have a stringent requirement $\rho=0.1$. In a very recent and independent work \cite{CJSC11}, the authors consider a model where both $O$ and $\Omega$ are unions of stochastic and deterministic subsets, while we only assume the stochastic model. We recommend interested readers to read the paper for the details. However, only considering their results on stochastic $O$ and $\Omega$, a direct comparison shows that the number of samples we need is less
than that in this reference. The difference is several logarithmic factors. Actually, the requirement of $\rho$ in our paper is optimal even for clean data in the literature of MC. Finally, we want to emphasize that the random support assumption is essential in Theorem 1.3 when the rank is large. Examples can be found in \cite{HKZ11}.\\
\\
We wish to close our introduction with a few words concerning the
techniques of proof we shall use. The proof of
Theorem \ref{thm 1.1} is based on the concept of restricted
isometry, which is a standard technique in the literature of CS. However, our argument involves a generalization of the restricted isometry concept.  The proofs of Theorems \ref{thm 1.2} and
\ref{thm 1.3} are based on the golfing scheme, an elegant technique
pioneered by David Gross \cite{Gross09}, and later used in
\cite{Recht09, CLMW, CP10} to construct dual
certificates. Our proof leverages results from \cite{CLMW}. However, we contribute novel elements by finding an
appropriate way to phrase sufficient optimality conditions, which are
amenable to the golfing scheme. Details are presented in the following sections.\\

\section{A Proof of Theorem \ref{thm 1.1}}
In the proof of Theorem \ref{thm 1.1}, we will see the notation $P_T x$. Here $x$ is a $k$-dimensional vector, $T$ is a subset of $\{1, ..., k\}$ and we also use $T$ to represent the subspace of all $k$-dimensional vectors supported on $T$.
Then $P_T x$ is the projection of $x$ onto the subspace $T$, which is to
keep the value of $x$ on the support $T$ and to change other elements into
zeros. In this section we use the notation ``$\lfloor . \rfloor$" of ``floor function" to represent the integer part of any real number.\\
\\
First we generalize the concept of the restricted isometry property (RIP) \cite{CT05} for the convenience to prove our theorem:
\begin{definition}
\label{definition 2.1}
For any matrix $\Phi \in \mathbb{R}^{l \times (n+m)}$, define the RIP-constant
$\delta_{s_1, s_2}$  by the infimum value of $\delta$ such that
\[
(1-\delta)(\|x\|_2^2+\|f\|_2^2)\leq \left\|\Phi \begin{bmatrix} x\\ f \end{bmatrix}\right\|_2^2\leq (1+\delta)(\|x\|_2^2+\|f\|_2^2)
\] 
holds for any $x\in
\mathbb{R}^n$ with $|\supp{x}|\leq s_1$ and $f\in \mathbb{R}^m$ with
$|\supp{f}|\leq s_2$.
\end{definition}

\begin{lemma}
\label{lemma 2.2}
For any $x_1, x_2 \in \mathbb{R}^n$ and $f_1, f_2 \in \mathbb{R}^m$ such that
$\supp{x_1}\cap \supp{x_2}=\phi$, $|\supp{x_1}|+|\supp{x_2}|\leq s_1$ and
$\supp{f_1}\cap \supp{f_2}=\phi$, $|\supp{f_1}|+|\supp{f_2}|\leq s_2$, we
have 
\[
\left|\left\langle \Phi \begin{bmatrix} x_1\\ f_1 \end{bmatrix}, \Phi \begin{bmatrix}
x_2\\ f_2 \end{bmatrix}\right\rangle \right|\leq \delta_{s_1,
s_2}\sqrt{\|x_1\|_2^2+\|f_1\|_2^2}\sqrt{\|x_2\|_2^2+\|f_2\|_2^2}
\]
\end{lemma}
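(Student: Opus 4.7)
The plan is to adapt the classical ``RIP implies near-orthogonality'' argument of Cand\`es and Tao (polarization identity combined with the parallelogram law) to the two-block sparsity setting in Definition \ref{definition 2.1}. The only structural feature that differs from the standard one-block case is that sparsity is measured separately on the $x$ and $f$ coordinates, but the disjoint-support hypotheses on both $x_1, x_2$ and $f_1, f_2$ are exactly what is needed to keep both blocks within their respective sparsity budgets after addition/subtraction.

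First I would reduce to the normalized case by homogeneity: if the right-hand side is zero there is nothing to prove, so assume both $\|x_i\|_2^2+\|f_i\|_2^2>0$ and rescale $(x_i,f_i)$ so that $\|x_i\|_2^2+\|f_i\|_2^2=1$ for $i=1,2$. Set
\[
u := \Phi \begin{bmatrix} x_1\\ f_1 \end{bmatrix}, \qquad v := \Phi \begin{bmatrix} x_2\\ f_2 \end{bmatrix},
\]
and observe that
\[
u\pm v = \Phi \begin{bmatrix} x_1\pm x_2\\ f_1\pm f_2 \end{bmatrix}.
\]
Because $\supp{x_1}\cap \supp{x_2}=\emptyset$ we have both $|\supp(x_1\pm x_2)|\le |\supp{x_1}|+|\supp{x_2}|\le s_1$ and $\|x_1\pm x_2\|_2^2=\|x_1\|_2^2+\|x_2\|_2^2$; the same identities hold for the $f$-block with sparsity budget $s_2$. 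So each of the vectors $(x_1\pm x_2, f_1\pm f_2)$ is an admissible test vector for the RIP constant $\delta_{s_1,s_2}$, and in both cases the total squared norm equals $\|x_1\|_2^2+\|x_2\|_2^2+\|f_1\|_2^2+\|f_2\|_2^2=2$.

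Applying Definition \ref{definition 2.1} to both $u+v$ and $u-v$ gives
\[
2(1-\delta_{s_1,s_2})\le \|u\pm v\|_2^2 \le 2(1+\delta_{s_1,s_2}).
\]
The polarization identity $\langle u,v\rangle = \tfrac14\bigl(\|u+v\|_2^2-\|u-v\|_2^2\bigr)$ then yields $|\langle u,v\rangle|\le \delta_{s_1,s_2}$, which is exactly the claimed bound in the normalized case. Unscaling by the original norms $\sqrt{\|x_1\|_2^2+\|f_1\|_2^2}$ and $\sqrt{\|x_2\|_2^2+\|f_2\|_2^2}$ recovers the general statement. There is no real obstacle here, beyond checking that the disjoint-support hypothesis is used twice (once to bound the support size after combining, once to split the squared norm), which is precisely what makes the generalized RIP constant $\delta_{s_1,s_2}$ the right quantity to appear on the right-hand side.
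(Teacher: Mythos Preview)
Your proof is correct and follows essentially the same approach as the paper: normalize so that $\|x_i\|_2^2+\|f_i\|_2^2=1$, apply the $(s_1,s_2)$-RIP bounds to $(x_1\pm x_2,\,f_1\pm f_2)$ using the disjoint-support hypotheses, and conclude via polarization before rescaling. Your write-up is in fact slightly more explicit than the paper's in spelling out why the disjoint supports give both the sparsity bound and the norm identity $\|x_1\pm x_2\|_2^2+\|f_1\pm f_2\|_2^2=2$.
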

\begin{proof}
First, we suppose $\|x_1\|_2^2+\|f_1\|_2^2=\|x_2\|_2^2+\|f_2\|_2^2=1$. By the
definition of $\delta_{s_1, s_2}$, we have
\[
2(1-\delta_{s_1, s_2})\leq \left\langle \Phi \begin{bmatrix} x_1+x_2\\ f_1+f_2
\end{bmatrix}, \Phi \begin{bmatrix} x_1+x_2\\ f_1+f_2 \end{bmatrix}\right\rangle
\leq 2(1+\delta_{s_1, s_2}),
\]
and
\[
2(1-\delta_{s_1, s_2})\leq \left\langle \Phi \begin{bmatrix} x_1-x_2\\ f_1-f_2
\end{bmatrix}, \Phi \begin{bmatrix} x_1-x_2\\ f_1-f_2 \end{bmatrix}\right\rangle
\leq 2(1+\delta_{s_1, s_2}).
\]
By the above inequalities, we have
$\left|\left\langle \Phi \begin{bmatrix} x_1\\ f_1 \end{bmatrix}, \Phi \begin{bmatrix}
x_2\\ f_2 \end{bmatrix}\right\rangle\right|\leq \delta_{s_1, s_2}$, and hence by
homogeneity, we have
$\left|\left\langle \Phi \begin{bmatrix} x_1\\ f_1 \end{bmatrix}, \Phi \begin{bmatrix}
x_2\\ f_2 \end{bmatrix}\right\rangle \right|\leq \delta_{s_1,
s_2}\sqrt{\|x_1\|_2^2+\|f_1\|_2^2}\sqrt{\|x_2\|_2^2+\|f_2\|_2^2}$ without the
norm assumption.
\end{proof}

\begin{lemma}
\label{lemma 2.3}
Suppose $\Phi \in \mathbb{R}^{l \times (n+m)}$ with RIP-constant $\delta_{2s_1,
2s_2}<{1\over 18}$ ($s_1, s_2 >0$)and $\lambda$ is between ${1\over 2}\sqrt{{s_1}\over{s_2}}$
and $2\sqrt{{s_1}\over{s_2}}$. Then for any $x \in \mathbb{R}^n$ with
$|\supp{x}|\leq s_1$, any $f \in \mathbb{R}^m$ with $|\supp{f}|\leq s_2$, and any $w \in \mathbb{R}^m$ with $\|w\|_2 \leq \epsilon$ the
solution $(\hat{x}, \hat{f})$ to the optimization problem \eqref{decoding 1}
satisfies $\|\hat{x}-x\|_2+\|\hat{f}-f\|_2\leq {{4\sqrt{13+13\delta_{2s_1, 2s_2}}}\over{1-9\delta_{2s_1, 2s_2}}}\epsilon$.
\end{lemma}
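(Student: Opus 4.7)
The plan is to adapt the classical RIP-based noisy recovery argument to the two-block structure with mixed-norm weighting. Let $h_x=\hat{x}-x$, $h_f=\hat{f}-f$, $T_0=\supp{x}$, $B_0=\supp{f}$, and write $\Phi[u;v]$ as shorthand for $\Phi\begin{bmatrix}u\\ v\end{bmatrix}$. Since $(x,f)$ is itself feasible (because $\|w\|_2\le\epsilon$), the triangle inequality gives the tube bound $\|\Phi[h_x;h_f]\|_2\le 2\epsilon$. Comparing the $\ell_1+\lambda\ell_1$ cost of $(\hat{x},\hat{f})$ with that of $(x,f)$ and splitting each $\ell_1$ norm across the pairs $T_0,T_0^c$ and $B_0,B_0^c$, the standard reverse-triangle manipulation yields the cone condition
\[
\|P_{T_0^c}h_x\|_1+\lambda\|P_{B_0^c}h_f\|_1 \;\le\; \|P_{T_0}h_x\|_1+\lambda\|P_{B_0}h_f\|_1.
\]

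Next I would sort $|h_x|$ on $T_0^c$ by decreasing magnitude and partition $T_0^c$ into consecutive blocks $T_1,T_2,\ldots$ of size $s_1$, and do the same for $h_f$ on $B_0^c$ with blocks $B_1,B_2,\ldots$ of size $s_2$. The pointwise-monotonicity inequality produces the standard tail bounds $\sum_{j\ge 2}\|P_{T_j}h_x\|_2\le \|P_{T_0^c}h_x\|_1/\sqrt{s_1}$ and $\sum_{k\ge 2}\|P_{B_k}h_f\|_2\le \|P_{B_0^c}h_f\|_1/\sqrt{s_2}$. Plugging $\|P_{T_0}h_x\|_1\le\sqrt{s_1}\|P_{T_0}h_x\|_2$ and $\|P_{B_0}h_f\|_1\le\sqrt{s_2}\|P_{B_0}h_f\|_2$ into the cone condition, and using the two-sided range $\tfrac12\sqrt{s_1/s_2}\le\lambda\le 2\sqrt{s_1/s_2}$ to convert between the $x$- and $f$-scales, I expect a clean tail estimate of the form
\[
\sum_{j\ge 2}\|P_{T_j}h_x\|_2+\sum_{k\ge 2}\|P_{B_k}h_f\|_2 \;\le\; C_1\bigl(\|P_{T_0}h_x\|_2+\|P_{B_0}h_f\|_2\bigr)
\]
for an explicit numerical constant $C_1$; the bracketing assumption on $\lambda$ is precisely what allows the $x$-tail and the $f$-tail to be absorbed symmetrically.

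The final step is the RIP closure. Set $T_{01}=T_0\cup T_1$, $B_{01}=B_0\cup B_1$, and $M=\sqrt{\|P_{T_{01}}h_x\|_2^2+\|P_{B_{01}}h_f\|_2^2}$. Definition \ref{definition 2.1} gives $(1-\delta_{2s_1,2s_2})M^2\le \|\Phi[P_{T_{01}}h_x;P_{B_{01}}h_f]\|_2^2$, and I would rewrite the head as $\Phi[h_x;h_f]$ minus the tail contributions $\sum_{j\ge 2}\Phi[P_{T_j}h_x;0]+\sum_{k\ge 2}\Phi[0;P_{B_k}h_f]$. Pairing with $\Phi[P_{T_{01}}h_x;P_{B_{01}}h_f]$, the tube bound and Cauchy--Schwarz contribute at most $2\epsilon\sqrt{1+\delta_{2s_1,2s_2}}\,M$ from the global part. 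Lemma \ref{lemma 2.2} does not apply directly to $\langle\Phi[P_{T_{01}}h_x;P_{B_{01}}h_f],\Phi[P_{T_j}h_x;0]\rangle$ because $|T_{01}|+|T_j|$ may be as large as $3s_1$; the remedy is to split the head further as $[P_{T_0}h_x;0]+[P_{T_1}h_x;0]+[0;P_{B_{01}}h_f]$, apply Lemma \ref{lemma 2.2} to each of the three pieces (each pair then has disjoint supports of combined size at most $2s_1$ on the $x$-side and $2s_2$ on the $f$-side), and sum. Bounding the scalar factors by $M$ via Cauchy--Schwarz and inserting the tail estimate from step two reduces everything to a linear inequality of the form $(1-c\,\delta_{2s_1,2s_2})\,M\le 2\epsilon\sqrt{1+\delta_{2s_1,2s_2}}$, which inverts cleanly under the hypothesis $\delta_{2s_1,2s_2}<1/18$. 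Then $\|h_x\|_2+\|h_f\|_2$ splits as head plus tails and each piece is bounded by a constant multiple of $M$, yielding the advertised estimate.

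The main obstacle I anticipate is bookkeeping the numerical constants precisely enough to produce the stated factors $1-9\delta_{2s_1,2s_2}$ in the denominator and $4\sqrt{13+13\delta_{2s_1,2s_2}}$ in the numerator. The two-sided bracket on $\lambda$ injects factors of $2$ on both sides of the cone condition that propagate through the tail--tail inner products, and the three-way head splitting needed to make Lemma \ref{lemma 2.2} applicable introduces further Cauchy--Schwarz losses; balancing these carefully is the only real work, as no new ideas are required beyond those outlined above.
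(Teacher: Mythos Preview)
Your outline matches the paper's argument almost exactly, but your proposed three-way head split has a small gap. You claim that after writing the head as $[P_{T_0}h_x;0]+[P_{T_1}h_x;0]+[0;P_{B_{01}}h_f]$, every pairing with a tail block satisfies the hypotheses of Lemma~\ref{lemma 2.2}. This fails for the pair $\bigl([0;P_{B_{01}}h_f],\,[0;P_{B_k}h_f]\bigr)$ with $k\ge 2$: on the $f$-side the combined support has size $|B_{01}|+|B_k|=2s_2+s_2=3s_2>2s_2$, so Lemma~\ref{lemma 2.2} does not apply. You have only fixed the $x$-side issue you identified, while reintroducing the same issue on the $f$-side.

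The paper's remedy is cleaner than any three- or four-way split: it groups the head symmetrically as $[P_{T_0}h_x;P_{B_0}h_f]+[P_{T_1}h_x;P_{B_1}h_f]$. Each of these two pieces has support of size $s_1$ on the $x$-side and $s_2$ on the $f$-side, so pairing either one with any tail block $[P_{T_j}h_x;0]$ or $[0;P_{B_k}h_f]$ stays within the $(2s_1,2s_2)$ budget of Lemma~\ref{lemma 2.2}. This two-piece split is also what makes the constants come out as stated: the tail chain gives $C_1=4$, the product $\bigl(\|[P_{T_0}h_x;P_{B_0}h_f]\|_2+\|[P_{T_1}h_x;P_{B_1}h_f]\|_2\bigr)\cdot 4\bigl(\|P_{T_0}h_x\|_2+\|P_{B_0}h_f\|_2\bigr)$ is bounded by $8M^2$ via two applications of Cauchy--Schwarz, and this yields $(1-9\delta_{2s_1,2s_2})M\le 2\epsilon\sqrt{1+\delta_{2s_1,2s_2}}$ and then the final factor $4\sqrt{13(1+\delta_{2s_1,2s_2})}$ from $\|h_x\|_2+\|h_f\|_2\le 5(\|P_{T_0}h_x\|_2+\|P_{B_0}h_f\|_2)+(\|P_{T_1}h_x\|_2+\|P_{B_1}h_f\|_2)\le\sqrt{52}\,M$.
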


\begin{proof}
Suppose $\Delta x=\hat{x}-x$ and $\Delta f = \hat{f}-f$. Then by \eqref{decoding 1} we have 
\[
\left\|\Phi \begin{bmatrix} \Delta x\\ \Delta f\end{bmatrix}\right\|_2 \leq \|w\|_2+\left\|\Phi \begin{bmatrix} \hat{x}\\ \hat{f} \end{bmatrix}-\left(\Phi \begin{bmatrix} x\\ f \end{bmatrix}+w\right)\right\|_2\leq 2\epsilon.
\] 
It is easy to check that the original $(x, f)$ satisfies the inequality constraint in \eqref{decoding 1}, so we have
\begin{eqnarray}
\label{ineq 2.1}
\|x+\Delta x\|_1+\lambda \|f+\Delta f\|_1\leq \|x\|_1+\lambda \|f\|_1.
\end{eqnarray}
Then it suffices to show $\|\Delta x\|_2+\|\Delta f\|_2\leq {{4\sqrt{13+13\delta_{2s_1, 2s_2}}}\over{1-9\delta_{2s_1, 2s_2}}}\epsilon$.

Suppose $T_0$ with $|T_0|=s_1$ such that $\supp{x}\in T_0$. Denote $T_0^c=T_1
\cup \cdots \cup T_l$ where $|T_1|=...=|T_{l-1}|=s_1$ and $|T_l|\leq s_1$.
Moreover, suppose $T_1$ contains the indices of the $s_1$ largest (in the sense
of absolute value) coefficients of $P_{T_0^c}\Delta x$, $T_2$ contains the indices of
the $s_1$ largest coefficients of $P_{(T_0\cup T_1)^c}\Delta x$, and so on.
Similarly, define $V_0$ such that $\supp{f}\subset V_0$ and $|V_0|=s_2$, and divide
$V_0^c=V_1\cup ... \cup V_k$ in the same way. By this setup, we easily have
\begin{equation}
\label{ineq 2.2}
\sum_{j\geq 2}\|P_{T_{j}}\Delta x\|_2\leq s_1^{-{1\over 2}}\|P_{T_{0}^c}\Delta
x\|_1,
\end{equation}
and
\begin{equation}
\label{ineq 2.3}
\sum_{j\geq 2}\|P_{V_{j}}\Delta f\|_2\leq s_2^{-{1\over 2}}\|P_{V_{0}^c}\Delta
f\|_1. 
\end{equation}
On the other hand, by the assumption $\supp{x}\subset T_0$ and $\supp{f}\subset V_0$, we
have, 
\begin{equation}
\label{ineq 2.4}
\|x+\Delta x\|_1=\|P_{T_0}x+P_{T_0}\Delta x\|_1+\|P_{T_0^c}\Delta
x\|_1\geq\|x\|_1-\|P_{T_0}\Delta x\|_1+\|P_{T_0^c}\Delta x\|_1,
\end{equation}
and similarly,
\begin{equation}
\label{ineq 2.5}
\|f+\Delta f\|_1\geq\|f\|_1-\|P_{V_0}\Delta f\|_1+\|P_{V_0^c}\Delta f\|_1.
\end{equation}
By inequalities (\ref{ineq 2.1}), (\ref{ineq 2.4}) and (\ref{ineq 2.5}), we have 
\begin{equation}
\label{ineq 2.6}
\|P_{T_0^c}\Delta x\|_1+\lambda\|P_{V_0^c}\Delta f\|_1\leq \|P_{T_0}\Delta
x\|_1+\lambda\|P_{V_0}\Delta f\|_1.
\end{equation}
By the definition of $\delta_{2s_1, 2s_2}$, the fact $\left\|\Phi \begin{bmatrix} \Delta x\\ \Delta f\end{bmatrix}\right\|_2\leq 2\epsilon$ and Lemma \ref{lemma 2.2}, we have
\begin{align*}
&(1-\delta_{2s_1, 2s_2})\left(\left\|P_{T_0}\Delta x+P_{T_1}\Delta x\right\|^2_2+\left\|P_{V_0}\Delta f+P_{V_1}\Delta f\right\|_2^2\right)\\
&\leq \left\|\Phi \begin{bmatrix} P_{T_0}\Delta x+P_{T_1}\Delta x \\ P_{V_0}\Delta f+P_{V_1}\Delta f \end{bmatrix}\right\|_2^2\\
&=\left\langle \Phi \begin{bmatrix} P_{T_0}\Delta x+P_{T_1}\Delta x \\ P_{V_0}\Delta f+P_{V_1}\Delta f \end{bmatrix}, \Phi \begin{bmatrix} \Delta x\\ \Delta f\end{bmatrix}-\Phi \begin{bmatrix} P_{T_2}\Delta x+...+P_{T_l}\Delta x \\ P_{V_2}\Delta f+...+P_{V_k}\Delta f \end{bmatrix} \right\rangle\\
&\leq-\left\langle \Phi \begin{bmatrix} P_{T_0}\Delta x+P_{T_1}\Delta x \\ P_{V_0}\Delta f+P_{V_1}\Delta f \end{bmatrix}, \Phi \begin{bmatrix} P_{T_2}\Delta x+...+P_{T_l}\Delta x \\ P_{V_2}\Delta f+...+P_{V_k}\Delta f \end{bmatrix} \right\rangle+2\epsilon\left\|\Phi \begin{bmatrix} P_{T_0}\Delta x+P_{T_1}\Delta x \\ P_{V_0}\Delta f+P_{V_1}\Delta f \end{bmatrix}\right\|_2\\
&\leq \delta_{2s_1, 2s_2}\left(\left\|\begin{bmatrix} P_{T_0}\Delta x \\ P_{V_0}\Delta f \end{bmatrix}\right\|_2+\left\|\begin{bmatrix} P_{T_1}\Delta x \\ P_{V_1}\Delta f \end{bmatrix}\right\|_2\right)\left(\sum_{j\geq 2}\|P_{T_j}\Delta x\|_2+\sum_{j \geq 2}\|P_{V_j}\Delta f\|_2\right)\\
&+2\epsilon\sqrt{1+\delta_{2s_1,2s_2}}\sqrt{\|P_{T_0}\Delta x\|_2^2+\|P_{T_1}\Delta x\|_2^2+\|P_{V_0}\Delta f\|_2^2+\|P_{V_1}\Delta f\|_2^2}.
\end{align*}
Moreover, since
\begin{align*}
&\sum_{j\geq 2}\|P_{T_j}\Delta x\|_2+\sum_{j \geq 2}\|P_{V_j}\Delta f\|_2\\
&\leq s_1^{-{1\over 2}}\|P_{T_{0}^c}\Delta x\|_1+s_2^{-{1\over 2}}\|P_{V_{0}^c}\Delta f\|_1&~~&\text{By (\ref{ineq 2.2}) and (\ref{ineq 2.3})}\\
&\leq 2s_1^{-{1\over 2}}(\|P_{T_{0}^c}\Delta x\|_1+\lambda\|P_{V_{0}^c}\Delta f\|_1)&~~&\text{By $\lambda>{1\over 2}\sqrt{{s_1}\over{s_2}}$}\\
&\leq 2s_1^{-{1\over 2}}(\|P_{T_{0}}\Delta x\|_1+\lambda\|P_{V_{0}}\Delta f\|_1)&~~&\text{By (\ref{ineq 2.6})}\\
&\leq 2s_1^{-{1\over 2}}(s_1^{1\over 2}\|P_{T_{0}}\Delta x\|_2+\lambda s_2^{1\over 2}\|P_{V_{0}}\Delta f\|_2)&~~&\text{By Cauchy-Schwartz inequality}\\
&\leq 4\|P_{T_0}\Delta x\|_2+4\|P_{V_0}\Delta f\|_2,&~~&\text{By $\lambda<2\sqrt{{s_1}\over{s_2}}$}
\end{align*}
we have
\begin{align*}
&\left(\left\|\begin{bmatrix} P_{T_0}\Delta x \\ P_{V_0}\Delta f \end{bmatrix}\right\|_2+\left\|\begin{bmatrix} P_{T_1}\Delta x \\ P_{V_1}\Delta f \end{bmatrix}\right\|_2\right)\left(\sum_{j\geq 2}\|P_{T_j}\Delta x\|_2+\sum_{j \geq 2}\|P_{V_j}\Delta f\|_2\right)\\
&\leq 8(\|P_{T_0}\Delta x\|_2^2+\|P_{T_1}\Delta x\|_2^2+\|P_{V_0}\Delta f\|_2^2+\|P_{V_1}\Delta f\|_2^2).
\end{align*}
Therefore, by $\delta_{2s_1, 2s_2}< 1/9$, we have
\[
\sqrt{\|P_{T_0}\Delta x\|_2^2+\|P_{T_1}\Delta x\|_2^2+\|P_{V_0}\Delta f\|_2^2+\|P_{V_1}\Delta f\|_2^2}\leq{{2\epsilon\sqrt{1+\delta_{2s_1, 2s_2}}}\over{1-9\delta_{2s_1, 2s_2}}}.
\]
Since 
\[
\sum_{j\geq 2}\|P_{T_j}\Delta x\|_2+\sum_{j \geq 2}\|P_{V_j}\Delta f\|_2\leq4\|P_{T_0}\Delta x\|_2+4\|P_{V_0}\Delta f\|_2,
\]
we have
\begin{align*}
\|\Delta x\|_2+\|\Delta f\|_2&\leq 5(\|P_{T_0}\Delta x\|_2+\|P_{V_0}\Delta f\|_2)+(\|P_{T_1}\Delta x\|_2+\|P_{V_1}\Delta f\|_2)\\
&\leq \sqrt{52}\sqrt{\|P_{T_0}\Delta x\|_2^2+\|P_{T_1}\Delta x\|_2^2+\|P_{V_0}\Delta f\|_2^2+\|P_{V_1}\Delta f\|_2^2}\\
&\leq{{4\sqrt{13+13\delta_{2s_1, 2s_2}}}\over{1-9\delta_{2s_1, 2s_2}}}\epsilon.
\end{align*}
\end{proof}

We now cite a well-known result in the literature of CS, e.g. Theorem 5.2 of
\cite{BDDW07}.

\begin{lemma}
\label{lemma 2.4}
Suppose $A$ is a random matrix defined in model 1. Then for any $0< \delta <1$,
there exist $c_1(\delta), c_2(\delta)>0$ such that with probability at least $1-2\exp(-c_2(\delta)m)$,
\[
(1-\delta)\|x\|_2^2\leq\|Ax\|_2^2\leq(1+\delta)\|x\|_2^2
\]
holds universally for any $x$ with $|\supp{x}|\leq c_1(\delta){m\over{\log {n\over m}+1}}$. 
\end{lemma}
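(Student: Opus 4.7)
The plan is to establish this RIP bound for Gaussian matrices by the standard concentration-plus-covering-plus-union-bound argument. First I would fix a unit vector $x$ supported on a given index set $T$ with $|T|=s$. Because the entries of $A$ are iid $N(0,1/m)$, the random variable $m\|Ax\|_2^2$ is distributed as $\chi^2_m$, so the Laurent–Massart tail bounds (or any standard sub-exponential concentration result for chi-squared sums) yield
\[
\mathbb{P}\bigl(\bigl|\|Ax\|_2^2-\|x\|_2^2\bigr|>t\|x\|_2^2\bigr)\leq 2\exp(-c_0 t^2 m),\qquad 0<t<1,
\]
for a universal $c_0>0$. The key point is that the deviation constant depends only on $t$, not on the ambient dimension.

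Next I would fix an index set $T$ and upgrade this pointwise bound to a uniform bound over all vectors supported on $T$. Choose a $(\delta/4)$-net $\mathcal{N}_T$ of the unit sphere of $\mathbb{R}^T$; by a standard volume argument one may take $|\mathcal{N}_T|\leq (12/\delta)^s$. A union bound gives that, with probability at least $1-2(12/\delta)^s\exp(-c_0(\delta/2)^2 m)$, every $v\in\mathcal{N}_T$ satisfies $|\|Av\|_2^2-1|\leq \delta/2$. A routine approximation argument (bound $\|Ax\|_2$ for general unit $x$ supported in $T$ by its closest net point and iterate, or use the standard "net to sphere" lemma) then shows the desired two-sided bound with constant $\delta$ holds for \emph{every} $x\in \mathbb{R}^T$.

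Then I would take a union bound over all $\binom{n}{s}$ choices of support set $T\subseteq\{1,\dots,n\}$ with $|T|=s$. Using $\binom{n}{s}\leq (en/s)^s$, the failure probability is at most
\[
2\Bigl(\tfrac{en}{s}\Bigr)^s\Bigl(\tfrac{12}{\delta}\Bigr)^s \exp\!\bigl(-c_0(\delta/2)^2 m\bigr)
=2\exp\!\Bigl(s\log\tfrac{en}{s}+s\log\tfrac{12}{\delta}-c_0(\delta/2)^2 m\Bigr).
\]
To drive the exponent to $-c_2(\delta)m$ I would impose $s\bigl(\log(n/s)+C_\delta\bigr)\leq \tfrac{1}{2}c_0(\delta/2)^2 m$ and then check that a choice $s=c_1(\delta)\,m/(\log(n/m)+1)$ realizes this. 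Indeed, if $s=c_1 m/(\log(n/m)+1)$ with $c_1$ small, then $n/s=(n/m)(\log(n/m)+1)/c_1$, so $\log(n/s)\leq \log(n/m)+\log(\log(n/m)+1)+\log(1/c_1)\leq C'(\log(n/m)+1)$, and hence $s\log(n/s)\leq C'c_1 m$, which is absorbed into $c_2(\delta)m$ once $c_1$ is taken small enough relative to $\delta$.

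The main obstacle is the last calibration step: one must reconcile the $\log(n/s)$ that comes out of $\binom{n}{s}$ with the $\log(n/m)+1$ appearing in the statement, and verify that the logarithmic factor on the right is the \emph{sharp} one, not the cruder $\log n$. The other ingredients—chi-square concentration, covering number of a sphere in $\mathbb{R}^s$, and the net-to-sphere upgrade—are standard and contribute only to the implicit constants $c_1(\delta)$ and $c_2(\delta)$.
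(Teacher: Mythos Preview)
Your argument is correct and is exactly the standard concentration--net--union-bound proof of the Gaussian RIP. Note, however, that the paper does not actually prove Lemma~\ref{lemma 2.4}: it is stated without proof and cited as Theorem~5.2 of \cite{BDDW07}. Your sketch is essentially the proof given in that reference, so there is nothing to compare.
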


Also, we cite a well-know result which can give a bound for the biggest singular
value of random matrix, e.g. \cite{DS01} and \cite{Vershynin10}.
\begin{lemma}
\label{lemma 2.5}
Let $B$ be an $m \times n$ matrix whose entries are independent standard normal
random variables. Then for every $t\geq 0$, with probability at least $1-2\exp(-t^2/2)$, one has $\|B\|_{2,2}\leq \sqrt{m}+\sqrt{n}+t$.
\end{lemma}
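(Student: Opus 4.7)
The plan is to prove this standard Gaussian operator-norm tail bound by combining two classical tools: Sudakov--Fernique (or equivalently Gordon's) Gaussian comparison inequality to bound $\mathbb{E}\|B\|_{2,2}$, and Gaussian Lipschitz concentration to control fluctuations of $\|B\|_{2,2}$ around its mean.

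First, I would represent the operator norm as a supremum of a centered Gaussian process, $\|B\|_{2,2} = \sup_{u \in S^{n-1},\, v \in S^{m-1}} X_{u,v}$ with $X_{u,v} := \langle Bu, v\rangle$. A direct computation gives the increment variance $\mathbb{E}(X_{u,v} - X_{u',v'})^2 = 2 - 2\langle u,u'\rangle\langle v,v'\rangle$. I would then introduce the comparison process $Y_{u,v} := g^\top u + h^\top v$, where $g \sim N(0, I_n)$ and $h \sim N(0, I_m)$ are independent, whose increments satisfy $\mathbb{E}(Y_{u,v} - Y_{u',v'})^2 = \|u-u'\|_2^2 + \|v-v'\|_2^2 = 4 - 2\langle u,u'\rangle - 2\langle v,v'\rangle$. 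The pointwise comparison $\mathbb{E}(X_{u,v} - X_{u',v'})^2 \leq \mathbb{E}(Y_{u,v} - Y_{u',v'})^2$ reduces to $(1-\langle u,u'\rangle)(1-\langle v,v'\rangle) \geq 0$, which is automatic on the product of unit spheres. Sudakov--Fernique then yields
\[
\mathbb{E}\|B\|_{2,2} \leq \mathbb{E}\sup_{u,v}\bigl(g^\top u + h^\top v\bigr) = \mathbb{E}\|g\|_2 + \mathbb{E}\|h\|_2 \leq \sqrt{n} + \sqrt{m},
\]
using Jensen's inequality in the last step.

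Second, I would invoke the fact that the map $M \mapsto \|M\|_{2,2}$ is $1$-Lipschitz with respect to the Frobenius norm on $\mathbb{R}^{m \times n}$, since $\bigl|\|M\|_{2,2} - \|M'\|_{2,2}\bigr| \leq \|M-M'\|_{2,2} \leq \|M-M'\|_F$. Viewing the entries of $B$ as an i.i.d.\ standard Gaussian vector in $\mathbb{R}^{mn}$, whose Euclidean length coincides with the Frobenius norm, the Borell--TIS inequality gives
\[
\mathbb{P}\bigl(\,\bigl|\|B\|_{2,2} - \mathbb{E}\|B\|_{2,2}\bigr| \geq t \,\bigr) \leq 2\exp(-t^2/2)
\]
for every $t \geq 0$. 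Combining the two bounds immediately yields $\|B\|_{2,2} \leq \sqrt{m} + \sqrt{n} + t$ with probability at least $1 - 2\exp(-t^2/2)$, as claimed.

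The only part with any real content is the increment comparison feeding Sudakov--Fernique, and even that is the textbook reduction via $(1-a)(1-b) \geq 0$; both Gordon's inequality and Gaussian concentration are standard off-the-shelf results (and indeed the statement is cited directly from Davidson--Szarek \cite{DS01} and Vershynin \cite{Vershynin10}). No obstacle of a nontrivial nature is anticipated.
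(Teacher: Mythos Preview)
Your argument is correct and is exactly the standard proof appearing in the references the paper cites (Davidson--Szarek and Vershynin's notes): Sudakov--Fernique/Slepian--Gordon comparison to get $\mathbb{E}\|B\|_{2,2}\le\sqrt{m}+\sqrt{n}$, followed by Gaussian Lipschitz concentration for the $1$-Lipschitz map $M\mapsto\|M\|_{2,2}$. The paper itself does not supply a proof at all---Lemma~2.5 is simply quoted as a well-known result from \cite{DS01,Vershynin10}---so there is nothing to compare against beyond noting that you have reproduced the textbook argument those references contain.
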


We now prove Theorem \ref{thm 1.1}:\\
\begin{proof}
Suppose $\alpha$, $\delta$ are two constants independent of $m$ and $n$, and their values will be specified later. Set $s_1=\left\lfloor \alpha{m\over{\log {n\over m}+1}} \right\rfloor$ and $s_2=\lfloor \alpha m \rfloor$. We
want to bound the RIP-constant $\delta_{2s_1, 2s_2}$ for the $(n+m)\times m$
matrix $\Phi=[A, I]$ when $\alpha$ is sufficiently small. For any $T$ with $|T|=2s_1$ and $V$ with $|V|=2s_2$, and
any $x$ with $\supp{x}\subset T$, any $f$ with $\supp{f}\subset V$, we have
\begin{eqnarray*}
\left\|[A, I]\begin{bmatrix} x\\ f \end{bmatrix}\right\|_2^2=\|Ax+f\|_2^2=\|Ax\|_2^2+\|f\|_2^2+2\langle P_V A P_T x,f\rangle.
\end{eqnarray*}
By Lemma \ref{lemma 2.4}, assuming $\alpha\leq c_1(\delta)$, with probability at least $1-2\exp(-c_2(\delta)m))$ we have 
\begin{equation}
\label{ineq 2.6.1}
(1-\delta)\|x\|_2^2\leq\|Ax\|_2^2\leq (1+\delta)\|x\|_2^2
\end{equation}
holds universally for any such $T$ and $x$.\\
\\
Now we we fix $T$ and $V$, and we want to bound $\|P_V A P_T\|_{2, 2}$. By
Lemma \ref{lemma 2.5}, we actually have
\begin{equation}
\label{ineq 2.6.2}
\|P_V A P_T\|_{2, 2}\leq{1\over{\sqrt{m}}}(\sqrt{2s_1}+\sqrt{2s_2}+\sqrt{\delta^2
m})\leq(2\sqrt{2\alpha}+\delta)
\end{equation}
with probability at least $1-2\exp(-{\delta^2 m/2})$.
Then with probability at least $1-2\exp(-{{\delta^2 m}\over 2}){n \choose 2s_1}{m \choose 2s_2}$, inequality \ref{ineq 2.6.2} holds universally for any $V$ satisfying $|V|=2s_1$ and $T$ satisfying $|V|=2s_2$. By $2s_1\leq
2\alpha{m\over{\log {n\over m}+1}}$, we have $2s_1\log ({{en}\over{2s_1}})\leq
\alpha_1 m$, where $\alpha_1$ only depends on $\alpha$ and $\alpha_1 \rightarrow 0$ as $\alpha \rightarrow 0$, and hence
${n \choose 2s_1}\leq ({{en}\over{2s_1}})^{2s_1}\leq \exp(\alpha_1 m)$.
Similarly, because $2s_2\leq 2\alpha m$, we have $2s_2\log
({{em}\over{2s_2}})\leq \alpha_2 m$, where $\alpha_2$ only depends on $\alpha$ and $\alpha_2 \rightarrow 0$ as $\alpha \rightarrow 0$, and hence ${m \choose 2s_2}\leq ({{em}\over{2s_2}})^{2s_2}\leq
\exp(\alpha_2 m)$. Therefore, inequality \ref{ineq 2.6.2} holds universally for any such $T$
and $V$ with probability at least $1-2\exp((\delta^2/2-\alpha_1-\alpha_2)m)$.\\
\\
Combined with \ref{ineq 2.6.1}, we have
\[
(1-\delta)\|x\|_2^2+\|f\|_2^2-(2\sqrt{2\alpha}+\delta)\|x\|_2\|f\|_2  
\leq \left\|[A,I]\begin{bmatrix} x \\ f \end{bmatrix}\right\|_2^2 
\leq (1+\delta)\|x\|_2^2+\|f\|_2^2+(2\sqrt{2\alpha}+\delta)\|x\|_2\|f\|_2
\]
holds universally for any such $T$, $U$, $x$ and $f$ which probability at least
$1-2\exp(-c_2(\delta)m))-2\exp((\delta^2/2-\alpha_1-\alpha_2)m)$. By choosing an appropriate $\delta$ and letting $\alpha$ sufficiently small, we have $\delta_{2s_1, 2s_2}<1/9$ with
probability at least $1-Ce^{-cm}$.\\
\\ 
Moreover, under the assumption that $\alpha \left( {m\over{\log (n/m)+1}}\right)\geq 1$, we have $s_1=\left\lfloor \alpha \left( {m\over{\log (n/m)+1}}\right) \right\rfloor>0$, $s_2=\lfloor \alpha m \rfloor>0$ and ${{1\over 2}\sqrt{{s_1}\over{s_2}}}<{1\over{\sqrt{\log {n\over m}+1}}}<{2\sqrt{{s_1}\over{s_2}}}$. Then Theorem \ref{thm 1.1} as a direct
corollary of Lemma \ref{lemma 2.3}
\end{proof}

\section{A Proof of Theorem \ref{thm 1.2} }
In this section we will encounter several absolute constants. Instead of
denoting them by $C_1$, $C_2$, ..., we just use $C$, i.e., the values of $C$ change
from line to line. Also, we will use the phrase ``with high probability" to mean
with probability at least $1-Cn^{-c}$, where $C>0$ is a numerical constant and $c=3, 4, \text{~or~}5$ depending on the context.\\
\\
Here we will use a lot of notations to represent sub-matrices and sub-vectors.
Suppose $A \in \mathbb{R}^{m \times n}$, $P \subset [m]:=\{1,...,m\}$, $Q \subset [n]$ and $i \in [n]$. We
denote by $A_{P,:}$ the sub-matrix of $A$ with row indices contained in $P$, by
$A_{:,Q}$ the sub-matrix of $A$ with column indices contained in $Q$, and by
$A_{P,Q}$ the sub-matrix of $A$ with row indices contained in $P$ and column
indices contained in $Q$. Moreover, we denote by $A_{P,i}$ the sub-matrix of $A$
with row indices contained in $P$ and column $i$, which is actually a column
vector.\\
\\
The term ``vector" means column vector in this section, and all row vectors are
denoted by an adjoint of a vector, such as $a^*$ for a vector $a$. Suppose $a$ is
a vector and $T$ a subset of indices. Then we denote by $a_T$ the restriction of
$a$ on $T$, i.e., a vector with all elements of $a$ with indices in $T$. For any vector $v$, we use $v_{\{i\}}$ to denote the $i$-th element of $v$.

\subsection{Supporting lemmas}
To prove Theorem \ref{thm 1.2} we need some supporting lemmas. Because our model
of sensing matrix $A$ is the same as in \cite{CP10}, we will cite some lemmas from it
directly.
\begin{lemma}
\label{lemma 3.1}(Lemma 2.1 of \cite{CP10})
Suppose $A$ is as defined in model 2. Let $T \subset [n]$ be a fixed set of cardinality s.
Then for $\delta >0$, $\mathbb{P}(\|A_{:,T}^*A_{:,T}-I\|_{2, 2}\geq \delta)\leq
2s\exp\left(-{m\over {\mu s}}\cdot{{\delta^2}\over {2(1+{\delta/3})}}\right)$. In
particular, $\|A_{:,T}^*A_{:,T}-I\|_{2, 2}\leq {1\over 2}$ with high probability
provided $s\leq \gamma {m\over{\mu \log n}}$, and $\|A_{:,T}^*A_{:,T}-I\|_{2, 2}\leq
{1\over {2\sqrt{\log n}}}$ with high probability provided $s\leq \gamma
{m\over{\mu \log^2 n}}$, where $\gamma$ is some absolute constant.
\end{lemma}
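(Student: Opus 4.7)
The inequality is a concentration bound for the operator norm of a sum of iid centered random self-adjoint matrices, so the natural tool is matrix Bernstein. My plan has three steps: (i) rewrite $A_{:,T}^{*}A_{:,T} - I_s$ as such a sum, (ii) bound the uniform summand norm $R$ and the matrix variance $\sigma^2$ using the incoherence bound $\|a\|_\infty \le \sqrt{\mu}$, and (iii) apply matrix Bernstein and specialize $\delta$.

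For step (i), the definition of $A$ in Model 2 gives
\[
A_{:,T}^{*}A_{:,T} - I_s \;=\; \sum_{i=1}^{m} X_i, \qquad X_i := \tfrac{1}{m}\bigl(a_{i,T} a_{i,T}^{*} - I_s\bigr),
\]
where $a_{i,T}$ denotes the restriction of $a_i$ to $T$. The assumption $\mathbb{E}[aa^{*}]=I$ implies each $X_i$ is centered, and the $X_i$ are iid self-adjoint $s\times s$ random matrices.

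For step (ii), the bound $\|a\|_\infty \le \sqrt{\mu}$ yields $\|a_{i,T}\|_2^2 \le \mu s$, so the uniform summand bound is $R := \|X_i\|_{2,2} \le \mu s/m$. For the variance, a direct expansion gives
\[
(a_{i,T}a_{i,T}^{*} - I_s)^2 \;=\; \|a_{i,T}\|_2^2\, a_{i,T}a_{i,T}^{*} - 2\, a_{i,T}a_{i,T}^{*} + I_s \;\preceq\; \mu s\, a_{i,T}a_{i,T}^{*} + I_s,
\]
so $\mathbb{E}[X_i^2] \preceq (\mu s+1) I_s/m^2$ and $\sigma^2 := \bigl\|\sum_i \mathbb{E}[X_i^2]\bigr\|_{2,2} \le (\mu s+1)/m$. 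Crucially, both $R$ and $\sigma^2$ scale like $\mu s/m$, so plugging them into the standard matrix Bernstein tail bound
\[
\mathbb{P}\Bigl(\bigl\|\textstyle\sum_i X_i\bigr\|_{2,2} \ge \delta\Bigr) \;\le\; 2s \exp\Bigl(-\tfrac{\delta^2/2}{\sigma^2 + R\delta/3}\Bigr)
\]
produces exactly the stated bound $2s\exp\bigl(-\tfrac{m}{\mu s}\cdot\tfrac{\delta^2}{2(1+\delta/3)}\bigr)$.

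The two ``in particular'' consequences then follow by immediate substitution. Taking $\delta = 1/2$ with $s \le \gamma m/(\mu \log n)$ and $\gamma$ small forces the exponent to exceed $c \log n$ for arbitrary $c$, so since $2s \le 2n$ the failure probability is $O(n^{-c'})$; taking $\delta = 1/(2\sqrt{\log n})$ with $s \le \gamma m/(\mu \log^2 n)$ makes $\delta^2 \cdot m/(\mu s) \asymp \log n$, giving the same kind of high-probability conclusion. The proof is essentially bookkeeping and I do not see any genuine obstacle: the only point that requires some care is verifying that $R$ and $\sigma^2$ happen to scale the same way in $\mu s/m$, which is exactly what makes the $1+\delta/3$ denominator in the stated inequality come out correctly.
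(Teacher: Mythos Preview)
Your proposal is correct and follows exactly the approach the paper attributes to \cite{CP10}, namely the matrix Bernstein inequality applied to the decomposition $A_{:,T}^{*}A_{:,T}-I_s=\sum_i X_i$ with $X_i=\tfrac1m(a_{i,T}a_{i,T}^{*}-I_s)$. One small refinement: if you take the expectation \emph{before} bounding, you get $\mathbb{E}[(a_{i,T}a_{i,T}^{*}-I_s)^2]=\mathbb{E}[\|a_{i,T}\|_2^2\,a_{i,T}a_{i,T}^{*}]-I_s\preceq(\mu s)I_s$, which gives $\sigma^2\le \mu s/m$ rather than $(\mu s+1)/m$ and recovers the stated constant exactly.
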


This Lemma was proved in \cite{CP10} by matrix Bernstein's inequality, which is first introduced by \cite{AW02}. A deep generalization is given in \cite{Tropp11}.

\begin{lemma}
\label{lemma 3.2}(Lemma 2.4 of \cite{CP10})
Suppose $A$ is as defined in model 2. Fix $T \subset [n]$ with $|T|=s$ and $v
\in \mathbb{R}^s$. Then $\|A_{:,T^c}^*A_{:,T} v\|_\infty\leq
{1\over{20\sqrt{s}}}\|v\|_2$ with high probability provided $s\leq \gamma
{m\over{\mu \log n}}$, where $\gamma$ is some absolute constant.
\end{lemma}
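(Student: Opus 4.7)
The plan is to prove the coordinate bound on $A_{:,T^c}^{\adj} A_{:,T} v$ via a scalar Bernstein inequality applied to each coordinate $j \in T^c$, followed by a union bound over the at most $n$ such coordinates. First I would expand each coordinate in terms of the iid rows $a_1,\dots,a_m$ of $\sqrt{m}\,A$: writing $(A_{:,T^c}^{\adj} A_{:,T} v)_j = \frac{1}{m}\sum_{i=1}^m a_i[j]\,\langle (a_i)_T, v\rangle =: \frac{1}{m}\sum_{i=1}^m X_i$, where $(a_i)_T$ denotes the restriction of $a_i$ to the $s$ coordinates in $T$. These $X_i$ are iid, and this is the representation the Bernstein argument will act on.

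Next I would verify the three ingredients Bernstein requires. For the mean, since $j \in T^c$ and $\E a_i a_i^{\adj} = I$, we have $\E X_i = \sum_{k \in T} v_k\,\E[a_i[j] a_i[k]] = 0$. For the almost-sure bound, $\|a\|_\infty \le \sqrt{\mu}$ and Cauchy--Schwarz give $|X_i| \le \sqrt{\mu}\,\|(a_i)_T\|_2\,\|v\|_2 \le \sqrt{\mu}\cdot\sqrt{s\mu}\,\|v\|_2 = \mu\sqrt{s}\,\|v\|_2$. For the variance, $\E X_i^2 \le \mu\,\E\,\langle (a_i)_T, v\rangle^2 = \mu\,v^{\adj}\,\E[(a_i)_T(a_i)_T^{\adj}]\,v = \mu\|v\|_2^2$, since the principal submatrix of $I$ indexed by $T$ is $I_s$. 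Together the sum $\sum_i X_i$ is a centered sum with variance proxy $m\mu\|v\|_2^2$ and almost-sure bound $\mu\sqrt{s}\,\|v\|_2$.

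Then I would apply the scalar Bernstein inequality at the target deviation $t = \tfrac{1}{20\sqrt{s}}\|v\|_2$, i.e.\ to $\tau = mt$. The resulting exponent is
\[
-\,\frac{m^2 t^2/2}{m\mu\|v\|_2^2 + \tfrac{1}{3}\mu\sqrt{s}\,\|v\|_2\cdot mt}
= -\,\frac{m/(800\,s\mu)}{1 + \tfrac{1}{60}},
\]
which under the hypothesis $s \le \gamma\,m/(\mu\log n)$ with $\gamma$ a suitably small constant becomes at most $-c\log n$ for any fixed $c$ we wish. Hence $|(A_{:,T^c}^{\adj} A_{:,T} v)_j| \le \tfrac{1}{20\sqrt{s}}\|v\|_2$ with probability $\ge 1-2n^{-c}$ for each fixed $j \in T^c$; a union bound over the at most $n$ indices $j$ yields the desired uniform bound on the infinity norm, with the loss of one logarithmic factor being absorbed by choosing $\gamma$ small.

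The main obstacle is really just a bookkeeping one, namely arranging the Bernstein constants so that the $1/(20\sqrt{s})$ target is met with a clean exponent $-c\log n$; there is no delicate probabilistic issue because the orthogonality $j \notin T$ kills the mean exactly and the coordinate boundedness $\|a\|_\infty \le \sqrt{\mu}$ directly controls both the variance and the range of $X_i$. The one place where I would have to be a little careful is to keep the $\mu$ dependence correct in the variance bound — using $\E X_i^2 \le \mu\|v\|_2^2$ rather than the cruder $\mu^2 s\|v\|_2^2$ is what allows the hypothesis to read $s \lesssim m/(\mu\log n)$ rather than $s \lesssim m/(\mu^2\log n)$.
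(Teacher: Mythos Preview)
Your proposal is correct and is exactly the standard argument for this lemma: expand each coordinate as a sum of iid mean-zero terms, control the range via $\|a\|_\infty\le\sqrt\mu$ and the variance via $\E aa^{\adj}=I$, apply scalar Bernstein, then union-bound over $j\in T^c$. The paper does not supply its own proof here but simply cites \cite{CP10}, whose Lemma~2.4 is proved by precisely this Bernstein-plus-union-bound computation; your bookkeeping of the exponent and the role of the sharper variance bound $\E X_i^2\le\mu\|v\|_2^2$ are both right.
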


\begin{lemma}
\label{lemma 3.3}(Lemma 2.5 of \cite{CP10})
Suppose $A$ is as defined in model 2. Fix $T \subset [n]$ with $|T|=s$. Then
$\max_{i\in T^c}\|A_{:,T}^*A_{:,i}\|_2\leq 1$ with high probability provided
$s\leq \gamma {m\over{\mu \log n}}$, where $\gamma$ is some absolute constant.
\end{lemma}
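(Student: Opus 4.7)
The plan is to apply a scalar Bernstein inequality coordinate-wise to the vector $A_{:,T}^{*}A_{:,i}$, then union-bound over coordinates and over $i \in T^c$. Write
\[
A_{:,T}^{*}A_{:,i} \;=\; \frac{1}{m}\sum_{k=1}^{m} a_{k,T}\, a_{k,i},
\]
where $a_{k,T}$ is the restriction of $a_k$ to $T$ and $a_{k,i}$ is a scalar. Since $i \in T^c$, for every $j \in T$ we have $j \neq i$, and $\mathbb{E}[a_{k,j} a_{k,i}] = (\mathbb{E}[a_k a_k^{*}])_{j,i} = I_{j,i} = 0$. So coordinate $j$ of $A_{:,T}^{*}A_{:,i}$ is a normalized iid sum of the mean-zero scalars $Z_k := a_{k,j} a_{k,i}$, which by the incoherence assumption satisfy $|Z_k| \le \mu$ almost surely and $\mathbb{E}|Z_k|^2 \le \mu\,\mathbb{E}|a_{k,i}|^2 = \mu$.

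Next I would apply the scalar Bernstein inequality with threshold $t = m/\sqrt{s}$ to each coordinate:
\[
\mathbb{P}\!\left(\bigl|\sum_k Z_k\bigr| \geq m/\sqrt{s}\right) \;\leq\; 2\exp\!\left(-\frac{m^2/(2s)}{m\mu + \mu m/(3\sqrt{s})}\right).
\]
For $s \geq 1$ the denominator is at most $2m\mu$, so the exponent is at least $m/(4 s \mu)$. Under the hypothesis $s \leq \gamma m/(\mu\log n)$ this is at least $(\log n)/(4\gamma)$, which can be made a large multiple of $\log n$ by taking $\gamma$ small. A union bound over the at most $s$ coordinates $j \in T$ and the at most $n$ indices $i \in T^c$ then gives, with probability at least $1 - Cn^{-3}$,
\[
\|A_{:,T}^{*}A_{:,i}\|_\infty \;\leq\; 1/\sqrt{s}\qquad\text{for all }i\in T^{c},
\]
and therefore $\|A_{:,T}^{*}A_{:,i}\|_2 \leq \sqrt{s}\cdot \|A_{:,T}^{*}A_{:,i}\|_\infty \leq 1$, as claimed.

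The main obstacle is purely a bookkeeping one: the entrywise threshold $1/\sqrt{s}$, the union-bound cost of size $sn$, and the sparsity budget $s \leq \gamma m/(\mu\log n)$ must be balanced so that the Bernstein exponent comes out larger than needed to absorb the union bound, and the sub-Gaussian versus sub-exponential regimes of Bernstein need to be distinguished to see that the variance term $m\mu$ really does dominate the Bernstein denominator (the bounded term $\mu m/(3\sqrt{s})$ is smaller). An alternative route would be to use a vector Bernstein bound directly on $\sum_k a_{k,T} a_{k,i}$, relying on $\|\mathbb{E}[\,a_{k,i}^{2}\, a_{k,T} a_{k,T}^{*}\,]\|_{2,2} \leq \mu$ for the variance and $\|a_{k,T} a_{k,i}\|_2 \leq \mu\sqrt{s}$ for the almost sure bound. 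This reaches the same conclusion but yields a less transparent dependence on $s$ because of the $\sqrt{s}$ in the almost sure bound, so the coordinate-wise scalar approach above is cleaner.
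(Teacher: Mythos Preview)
Your argument is correct. The paper does not prove this lemma at all; it is quoted verbatim as Lemma~2.5 of \cite{CP10} and used as a black box. So there is no proof in the present paper to compare against.

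For what it is worth, your coordinate-wise scalar Bernstein argument is a perfectly valid way to establish the result, and all the arithmetic checks out: the variance bound $\mathbb{E}|Z_k|^2 \le \mu\,\mathbb{E}|a_{k,i}|^2 = \mu$ uses $\mathbb{E}[aa^*]=I$, the almost-sure bound $|Z_k|\le\mu$ uses $\|a\|_\infty\le\sqrt{\mu}$, and the union bound over at most $sn\le n^2$ events is absorbed by taking $\gamma$ small enough that $m/(4s\mu)\ge (\log n)/(4\gamma)$ exceeds a suitable multiple of $\log n$. The original proof in \cite{CP10} proceeds instead via a vector Bernstein inequality applied directly to $\sum_k a_{k,T}a_{k,i}$ (the alternative route you sketch at the end), which avoids the extra $\sqrt{s}$ loss from passing through the $\ell_\infty$ norm but requires the vector-valued concentration machinery. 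Both approaches land on the same sparsity condition $s\le \gamma m/(\mu\log n)$, so for the purposes of this paper either is fine.
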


\subsection{A proof of Theorem 1.2}
In this part we will give a complete proof of Theorem \ref{thm 1.2} with a
powerful technique called "golfing-scheme" introduced by David Gross in \cite{Gross09}, and
later in \cite{CLMW} and \cite{CP10}. Under the assumption of model 2, we additionally assume
$s\leq \alpha {m\over{\mu \log^2 n}}$ and $m_b \leq \beta {m\over \mu}$, where
$\alpha$ and $\beta$ are numerical constants whose values will specified later.\\
\\
First we give two useful inequalities. By replacing $A$ with
$\sqrt{m\over{m-m_b}}A_{B^c,T}$ in Lemma 3.1 and Lemma 3.2, we have 
\begin{equation}
\label{ineq 3.2.1}
\left\|{m\over{m-m_b}}A_{B^c, T}^*A_{B^c, T}-I\right\|_{2, 2}\leq {1/2}
\end{equation} 
and 
\begin{equation}
\label{ineq 3.2.2}
\max_{i \in T^c}\left\|{m\over{m-m_b}}A_{B^c, T}^*A_{B^c,i}\right\|_2\leq 1
\end{equation}
with high probability provided $s\leq \gamma {{m-m_b}\over{\mu \log n}}$. Since
$s\leq \alpha {m\over{\mu \log^2 n}}$ and $m_b\leq \beta {m\over \mu}$, both
\ref{ineq 3.2.1} and \ref{ineq 3.2.2} hold with high probability provided
$\alpha$ and $\beta$ are sufficiently small. We assume (\ref{ineq 3.2.1}) and
(\ref{ineq 3.2.2}) hold throughout this section.\\
\\
First we prove that the solution $(\hat{x}, \hat{f})$ of \eqref{decoding 2} equals $(x, f)$ if we can find an appropriate dual vector
$q_{B^c}$ satisfying the following requirement. This is actually an ``inexact dual vector" of the optimization problem \eqref{decoding 2}. This idea was first given explicitly in \cite{GLFBE10} and \cite{Gross09}, and related to \cite{CP09Proceedings}. We give a result similar to \cite{CP10}.

\begin{lemma}{(Inexact Duality)}
\label{lemma 3.4}
Suppose there exists a vector $q_{B^c}\in \mathbb{R}^{m-m_b}$ satisfying
\begin{equation}
\label{ineq 3.2.3}
\|v_T-\sgn(x_T)\|_2\leq \lambda/4,  \text{~~~} \|v_{T^c}\|_\infty\leq 1/4
\text{~~~and~~~} \|q_{B^c}\|_\infty\leq \lambda/4,
\end{equation}
where
\begin{equation}
 \label{eq 3.2.4}
v=A_{B^c,:}^*q_{B^c}+A_{B,:}^*\lambda \sgn(f_B).
\end{equation}
Then the solution $(\hat{x}, \hat{f})$ of \eqref{decoding 2}
equals $(x, f)$ provided $\beta$ is sufficiently small and $\lambda<{3\over 2}$.
\end{lemma}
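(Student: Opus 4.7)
The plan is to suppose $(\hat x, \hat f) = (x + h_x, f + h_f)$ is an optimal solution to \eqref{decoding 2}. Feasibility of both $(x,f)$ and $(\hat x, \hat f)$ forces $A h_x + h_f = 0$, so $h_f = -A h_x$. I will combine the inexact dual certificate $(v, q_{B^c})$ with the two sensing estimates \eqref{ineq 3.2.1}--\eqref{ineq 3.2.2} to conclude that $h_x = 0$ and $h_f = 0$, starting from the optimality inequality $\|x+h_x\|_1 + \lambda\|f+h_f\|_1 \leq \|x\|_1 + \lambda\|f\|_1$.

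First I apply the $\ell_1$ subgradient inequality, choosing the subdifferential signs on $T^c$ and $B^c$ to align with $(h_x)_{T^c}$ and $(h_f)_{B^c}$. This produces
\[
0 \geq \langle \sgn(x_T), (h_x)_T\rangle + \|(h_x)_{T^c}\|_1 + \lambda\langle\sgn(f_B), (h_f)_B\rangle + \lambda\|(h_f)_{B^c}\|_1.
\]
Next, because $A_{B^c,:} h_x = -(h_f)_{B^c}$ and $A_{B,:} h_x = -(h_f)_B$, the definition \eqref{eq 3.2.4} of $v$ yields $\langle v, h_x\rangle = -\langle q_{B^c}, (h_f)_{B^c}\rangle - \lambda\langle\sgn(f_B), (h_f)_B\rangle$. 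Substituting to eliminate the term $\lambda\langle\sgn(f_B), (h_f)_B\rangle$, then splitting $\langle v, h_x\rangle = \langle v_T, (h_x)_T\rangle + \langle v_{T^c}, (h_x)_{T^c}\rangle$ and writing $\langle\sgn(x_T), (h_x)_T\rangle = \langle v_T, (h_x)_T\rangle + \langle\sgn(x_T) - v_T, (h_x)_T\rangle$, I bound the three resulting error terms via Cauchy--Schwarz and H\"older using \eqref{ineq 3.2.3} to collapse the inequality to
\[
\tfrac{\lambda}{4}\|(h_x)_T\|_2 \geq \tfrac{3}{4}\|(h_x)_{T^c}\|_1 + \tfrac{3\lambda}{4}\|(h_f)_{B^c}\|_1.
\]

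The crux is now to control $\|(h_x)_T\|_2$ in terms of the other two norms. Premultiplying $A_{B^c,:} h_x = -(h_f)_{B^c}$ by $A_{B^c,T}^*$ and splitting $h_x$ into its $T$ and $T^c$ parts yields
\[
A_{B^c,T}^* A_{B^c,T}(h_x)_T = -A_{B^c,T}^*(h_f)_{B^c} - \sum_{i \in T^c}(h_x)_i\, A_{B^c,T}^* A_{B^c,i}.
\]
From \eqref{ineq 3.2.1} one reads off $A_{B^c,T}^* A_{B^c,T} \succeq \tfrac{m-m_b}{2m} I$ and $\|A_{B^c,T}\|_{2,2} \leq \sqrt{3(m-m_b)/(2m)}$, while \eqref{ineq 3.2.2} gives $\|A_{B^c,T}^* A_{B^c,i}\|_2 \leq (m-m_b)/m$ uniformly for $i \in T^c$. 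Combining these with $\|(h_f)_{B^c}\|_2 \leq \|(h_f)_{B^c}\|_1$ yields
\[
\|(h_x)_T\|_2 \leq \sqrt{\tfrac{6m}{m-m_b}}\,\|(h_f)_{B^c}\|_1 + 2\|(h_x)_{T^c}\|_1.
\]

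Plugging this back into the earlier inequality produces
\[
\lambda\Bigl(\sqrt{\tfrac{6m}{m-m_b}} - 3\Bigr)\|(h_f)_{B^c}\|_1 + (2\lambda - 3)\|(h_x)_{T^c}\|_1 \geq 0,
\]
whose coefficients are both strictly negative once $\lambda < 3/2$ and $\beta$ is small enough that $m_b \leq \beta m/\mu \leq \beta m < m/3$. Thus $(h_x)_{T^c} = 0$ and $(h_f)_{B^c} = 0$; feeding this back into $A h_x + h_f = 0$ gives $A_{B^c,T}(h_x)_T = 0$, and invertibility of $A_{B^c,T}^* A_{B^c,T}$ from \eqref{ineq 3.2.1} forces $(h_x)_T = 0$, so $h_x = 0$ and $h_f = 0$. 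The main obstacle is the balancing of constants: the slack $\lambda/4$ and $1/4$ left by the inexact certificate must absorb the amplification $\sqrt{6m/(m-m_b)}$ incurred when inverting $A_{B^c,T}^* A_{B^c,T}$, which is exactly why the lemma requires $\lambda < 3/2$ together with $\beta$ sufficiently small.
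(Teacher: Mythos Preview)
Your argument is correct and essentially identical to the paper's own proof: the paper also starts from the optimality inequality, uses the identity coming from \eqref{eq 3.2.4} to replace the sign terms, arrives at the same key inequality $\tfrac{\lambda}{4}\|h_T\|_2 \geq \tfrac{3}{4}\|h_{T^c}\|_1 + \tfrac{3\lambda}{4}\|A_{B^c,:}h\|_1$ (your $\|(h_f)_{B^c}\|_1$ equals the paper's $\|A_{B^c,:}h\|_1$), bounds $\|h_T\|_2$ via \eqref{ineq 3.2.1}--\eqref{ineq 3.2.2} with the same constants $2$ and $\sqrt{6m/(m-m_b)}$, and closes with the injectivity of $A_{B^c,T}$. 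The only differences are notational.
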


\begin{proof}
Set $h=\hat{x}-x$. By $x_{T^c}=0$ we have 
\begin{equation}
\label{eq 3.2.5}
h_{T^c}=\hat{x}_{T^c}.
\end{equation}
 By $f_{B^c}=0$, and $Ax+f=A\hat{x}+\hat{f}$, we have $Ah=f-\hat{f}$ and
\begin{equation}
 \label{eq 3.2.6}
A_{B^c,:} h=(f-\hat{f})_{B^c}=-\hat{f}_{B^c}.
\end{equation}
Then we have the following inequality
\begin{align*}
&\|\hat{x}\|_1+\lambda\|\hat{f}\|_1\\
&=\langle \hat{x}_T, \sgn(\hat{x}_T)\rangle+\|\hat{x}_{T^c}\|_1+\lambda(\langle \hat{f}_B, \sgn(\hat{f}_B)\rangle+\|\hat{f}_{B^c}\|_1)\\
&\geq\langle \hat{x}_T, \sgn(x_T)\rangle+\|\hat{x}_{T^c}\|_1+\lambda(\langle \hat{f}_B, \sgn(f_B)\rangle+\|\hat{f}_{B^c}\|_1)\\
&=\langle x_T+h_T, \sgn(x_T)\rangle+\|h_{T^c}\|_1+\lambda(\langle f_B-A_{B,:}h, \sgn(f_B)\rangle+\|A_{B^c,:}h\|_1)&~~&\text{By (\ref{eq 3.2.5}), (\ref{eq 3.2.6})}\\
&=\|x\|_1+\lambda\|f\|_1+\|h_{T^c}\|_1+\lambda\|A_{B^c,:}h\|_1+\langle h_T, \sgn(x_T)\rangle-\lambda\langle A_{B,:}h, \sgn(f_B)\rangle.
\end{align*}
Since $\|\hat{x}\|_1+\lambda\|\hat{f}\|_1\leq \|x\|_1+\lambda\|f\|_1$, we have
\begin{equation}
 \label{ineq 3.2.7}
\|h_{T^c}\|_1+\lambda\|A_{B^c,:}h\|_1+\langle h_T,
\sgn(x_T)\rangle-\lambda\langle A_{B,:}h, \sgn(f_B)\rangle\leq 0.
\end{equation}
By (\ref{eq 3.2.4}), we have
\[
\langle h_T, v_T\rangle+\langle h_{T^c},  v_{T^c}\rangle
=\langle h, v\rangle
=\langle h, A_{B^c,:}^*q_{B^c}+A_{B,:}^*\lambda\sgn{f_B}\rangle
=\langle A_{B^c,:}h, q_{B^c}\rangle+\lambda\langle A_{B,:}h,\sgn{f_B}\rangle,
\]
and then by (\ref{ineq 3.2.3}),
\begin{align*}
\langle h_T, \sgn(x_T)\rangle-\lambda\langle A_{B,:}h, \sgn(f_B)\rangle
&=\langle h_T, (\sgn(x_T)-v_T)\rangle+\langle A_{B^c,:}h, q_{B^c}\rangle-\langle h_{T^c},  v_{T^c}\rangle\\
&\geq-{{\lambda}\over 4}\|h_T\|_2-{1\over 4}\lambda\|A_{B^c,:}h\|_1-{1\over 4}\|h_{T^c}\|_1.
\end{align*}
Unite it with (\ref{ineq 3.2.7}), we have
\begin{equation}
 \label{ineq 3.2.8}
-{{\lambda}\over 4}\|h_T\|_2+{3\over 4}\lambda\|A_{B^c,:}h\|_1+{3\over
4}\|h_{T^c}\|_1\leq 0.
\end{equation} 
By (\ref{ineq 3.2.1}), we have
$\left\|\sqrt{m\over{m-m_b}}A_{B^c,T}^*\right\|_{2, 2}\leq \sqrt{3\over 2}$ and the smallest
singular value of ${m\over{m-m_b}}A_{B^c,T}^*A_{B^c,T}$ is at least ${1\over
2}$. Therefore,
\begin{align*}
\|h_T\|_2
&\leq2\left\|{{m}\over{m-m_b}}A_{B^c,T}^*A_{B^c,T}h_T\right\|_2\\
&\leq2\left(\left\|{{m}\over{m-m_b}}A_{B^c,T}^*A_{B^c,T^c}h_{T^c}\right\|_2+\left\|{{m}\over{m-m_b}} A_{B^c,T}^*A_{B^c,:}h\right\|_2\right)\\
&\leq2\left\|{{m}\over{m-m_b}}A_{B^c,T}^*A_{B^c,T^c}h_{T^c}\right\|_2+\sqrt{6}\left\|\sqrt{{m} \over {m-m_b}}A_{B^c,:}h\right\|_2\\
&\leq2\sum_{i \in T^c}\left\|{{m}\over{m-m_b}}A_{B^c,T}^*A_{B^c,i}\right\|_2|h_{\{i\}}|+\sqrt{6}\left\|\sqrt{{m} \over{m-m_b}}A_{B^c,:}h\right\|_2&~~&\text{By the triangle inequality}\\
&\leq2\|h_{T^c}\|_1+\sqrt{6}\left\|\sqrt{{m}\over{m-m_b}}A_{B^c,:}h\right\|_1&~~&\text{By (\ref{ineq 3.2.2})}.
\end{align*}
Plugging this into (\ref{ineq 3.2.8}), we have$\left({3\over
4}-{1\over2}\lambda\right)\|h_{T^c}\|_1+\left({3\over 4}-{{\sqrt{6}}\over
4}\sqrt{m\over{m-m_b}}\right)\lambda\|A_{B^c,:}h\|_1\leq 0$. We know ${3\over 4}-{{\sqrt{6}}\over
4}\sqrt{m\over{m-m_b}}>0$ when $\beta$ is sufficiently small. Moreover, by the assumption $\lambda<{3\over 2}$, we have
$h_{T^c}=0$ and $A_{B^c,:}h=0$. Since $A_{B^c,:}h=A_{B^c,T}h_T+A_{B^c,T^c}h_{T^c}$, we have $A_{B^c,T}h_T=0$. The inequality (\ref{ineq 3.2.1}) implies that $A_{B^c,T}$ is
injective, so $h_T=0$ and $h=h_T+h_{T^c}=0$, which implies $(\hat{x}, \hat{f})=(x, f)$.
\end{proof}
Now let's construct a vector $q_{B^c}$ satisfying the requirement (\ref{ineq
3.2.3}) by choosing an appropriate $\lambda$.\\
\\
\begin{proof}(of Theorem \ref{thm 1.2})
Set $\lambda={1\over{\sqrt{\log n}}}$. It suffices to construct a $q_{B^c}$
satisfying \eqref{ineq
3.2.3}. Denoting $u=A_{B^c.:}^*q_{B^c}$, we only need to construct a $q_{B^c}$
satisfying
\begin{eqnarray*}
 \|u_T+\lambda A_{B,T}^*\sgn(f_B)-\sgn(x_T)\|_2\leq {\lambda \over 4},~~
 \|u_{T^c}\|_\infty \leq {1\over 8},~~
\|\lambda A_{B,:}^*\sgn(f_B)\|_\infty \leq {1\over 8},~~
 \|q_{B^c}\|_\infty \leq  {\lambda \over 4}.
\end{eqnarray*}

Now let's construct our $q_{B^c}$ by the golfing scheme. First we have to write
$A_{B^c,:}$ as a block matrix. We divide $B^c$ into $l=\lfloor \log_2
n+1\rfloor=\lfloor{{\log n}\over{\log 2}}+1\rfloor$ disjoint subsets: $B^c=G_1\cup...\cup G_l$
where $|G_i|=m_i$. Then we have $\sum_{i=1}^l m_i=m-m_b$ and
\[
 A_{B^c,:}=\begin{bmatrix} A_{G_1,:} \\ \cdots \\ A_{G_l,:} \end{bmatrix}.
\]
We want to mention that the partition of $B^c$ is deterministic, not depending on $A$, so $A_{G_1,:}, ..., A_{G_l,:}$ are independent. 
Noticing $m_b\leq \beta {m\over \mu}\leq \beta m$, by letting $\beta$
sufficiently small, we can require
\[
{m\over{m_1}}\leq C, ~~{m\over{m_2}}\leq C, ~~{m\over{m_k}}\leq C\log n
\text{~~for~~} k=3,...,l
\]
for some absolute constant $C$. Since $s\leq \alpha {m\over {\mu \log^2 n}}$, we
have
\begin{equation}
\label{ineq 3.2.9}
s\leq \alpha C {{m_1}\over{\mu \log^2 n}}, ~~s\leq \alpha C {{m_2}\over{\mu
\log^2 n}}, ~~s\leq \alpha C {{m_k}\over{\mu \log n}} \text{~~for~~} k=3,...,l.
\end{equation}
Then by Lemma \ref{lemma 3.1}, replacing $A$ with $\sqrt{m\over{m_j}}A_{G_j,T}$,
we have the following inequalities:
\begin{eqnarray}
\label{ineq 3.2.10}
 \left\|{m\over{m_j}}A_{G_j,T}^*A_{G_j,T}-I\right\|_{2, 2}&\leq& {1\over {2\sqrt{\log n}}}
\text{~for~~}j=1,2;\\
\label{ineq 3.2.11}
 \left\|{m\over{m_j}}A_{G_j,T}^*A_{G_j,T}-I\right\|_{2, 2}&\leq& {1\over 2}
\text{~for~~}j=3,...,l;
\end{eqnarray}
with high probability provided $\alpha$ is sufficiently small.\\
\\

Now let's give an explicit construction of $q_{B^c}$. Define
\begin{equation}
\label{eq 3.2.12}
p_0=\sgn(x_T)-\lambda A_{B,T}^*\sgn(f_B)
\end{equation} 
and
\begin{equation}
\label{eq 3.2.13} 
p_i=\left(I-{m\over{m_i}}A_{G_i,T}^*A_{G_i,T}\right)p_{i-1}=\left(I-{m\over{m_i}}A_{G_i,T}^*A_{G_i,T}\right)\cdots\left(I-{m
\over{m_1}}A_{G_1,T}^*A_{G_1,T}\right)p_0 
\end{equation}
for $i=1,...,l$, and construct
\begin{equation}
\label{eq 3.2.14}
 q_{B^c}=\begin{bmatrix} {m\over{m_1}}A_{G_1,T}p_0 \\ \vdots \\{m\over{m_l}}A_{G_l,T}p_{l-1} \end{bmatrix}.
\end{equation}
Then by $u=A_{B^c,:}^*q_{B^c}$, we have
\begin{equation}
\label{eq 3.2.15}
u=A_{B^c,:}^*\begin{bmatrix} {m\over{m_1}}A_{G_1,T}p_0 \\ \vdots \\{m\over{m_l}}A_{G_l,T}p_{l-1} \end{bmatrix}=\sum_{i=1}^l{m
\over{m_i}}A_{G_i,:}^*A_{G_i,T}p_{i-1}.
\end{equation}

We now bound the $\ell_2$ norm of $p_i$. Actually, by (\ref{ineq 3.2.10}), (\ref{ineq 3.2.11}) and (\ref{eq 3.2.13}), we have
\begin{eqnarray}
 \label{ineq 3.2.16}
\|p_1\|_2&\leq& {1\over{2\sqrt{\log n}}}\|p_0\|_2,\\
\label{ineq 3.2.17}
\|p_2\|_2&\leq& {1\over{4\log n}}\|p_0\|_2,\\
\label{ineq 3.2.18}
\|p_j\|_2&\leq& {1\over{\log n}}({1\over 2})^j\|p_0\|_2 \text{~for~~}j=3,...,l.
\end{eqnarray}

Now we will prove our constructed $q_{B^c}$ satisfies the desired requirements:
\paragraph{The proof of $\left\|\lambda A_{B,:}^*\sgn(f_B)\right\|_\infty \leq {1\over 8}$}~\\
By Hoeffding's inequality, for any $i=1,...,n$, we have $\mathbb{P}\left(\left|A_{B,i}^*\sgn(f_B)\right|\geq t\right)\leq 2\exp\left(-{{2t^2}\over{4\left\|A_{B,i}\right\|_2^2}}\right)$. By choosing $t=C\sqrt{\log n}\left\|A_{B,i}\right\|_2$ ($C$ is some absolute constant), with high probability, we have $\left|\lambda A_{B,i}^*\sgn(f_B)\right|\leq \lambda C \sqrt{\log n}\left\|A_{B,i}\right\|_2\leq C\sqrt{{\mu m_b}\over m}\leq \sqrt{\beta}\leq {1\over 8}$, provided $\beta$ is sufficiently small, and this implies $\left\|\lambda A_{B,:}^*\sgn(f_B)\right\|_\infty \leq{1\over 8}$.

\paragraph{The proof of $ \left\|u_T+\lambda A_{B,T}^*\sgn(f_B)-\sgn(x_T)\right\|_2\leq {\lambda \over 4}$}~\\
By (\ref{eq 3.2.15}) and (\ref{eq 3.2.13}), we have
$u_T=\displaystyle\sum\limits_{i=1}^l{m\over{m_i}}A_{G_i,T}^*A_{G_i,T}p_{i-1}=\displaystyle\sum\limits_{i=1}^l (p_{i-1}-p_i)=p_0-p_l$.
Then by (\ref{eq 3.2.12}) we have $\left\|u_T+\lambda A_{B,T}^*\sgn(f_B)-\sgn(x_T)\right\|_2=\left\|u_T-p_0\right\|_2=\|p_l\|_2$. Since
$\left\|\lambda A_{B,:}^*\sgn(f_B)\right\|_\infty\leq 1/8$, we have $\left\|\lambda
A_{B,T}^*\sgn(f_B)\right\|_2\leq {1\over 8}\sqrt{s}$, which implies 
\begin{equation}
\label{ineq 3.2.19}
\|p_0\|_2=\left\|\lambda A_{B,T}^*\sgn(f_B)-\sgn(x_T)\right\|_2\leq {9\over 8}\sqrt{s}.
\end{equation}
 Then by (\ref{ineq 3.2.18}) and $l=\lfloor \log_2 n+1 \rfloor$, we have $\|p_l\|_2\leq {1\over{\log n}}({1\over 2})^l{9\over 8}\sqrt{s}\leq \left({1\over {\log n}}\right)\left({1\over n}\right)\left( {9\over 8}\right)\sqrt{{\alpha m}\over{\mu \log^2 n}}\leq {1\over{4\sqrt{\log n}}}={\lambda \over 4}$, provided $\alpha$ is sufficiently small.

\paragraph{The proof of $\left\|u_{T^c}\right\|_\infty\leq 1/8$}~\\
By (\ref{eq 3.2.15}), we have $u_{T^c}=\displaystyle\sum\limits_{i=1}^l{m\over{m_i}}A_{G_i,T^c}^*A_{G_i,T}p_{i-1}$.
Recall that $A_{G_1,:},..., A_{G_l,:}$ are independent, so by the construction of  $p_{i-1}$ we know $A_{G_i,:}$ and $p_{i-1}$ are independent. Replacing $A$ with $\sqrt{m\over{m_i}}A_{G_i,:}$ in Lemma \ref{lemma 3.2}, and by the sparsity condition (\ref{ineq 3.2.9}), we have  $\displaystyle\sum\limits_{i=1}^l\left\|{m\over{m_i}}A_{G_i,T^c}^*A_{G_i,T}p_{i-1}\right\|_\infty\leq \displaystyle\sum\limits_{i=1}^l {1\over 20}{1\over{\sqrt{s}}}\|p_{i-1}\|_2$ with high probability, provided $\alpha$ is sufficiently small.
By (\ref{ineq 3.2.16}), (\ref{ineq 3.2.17}), (\ref{ineq 3.2.18}) and (\ref{ineq 3.2.19}), we have
$\|u_{T^c}\|_\infty\leq \displaystyle\sum\limits_{i=1}^l {1\over 20}{1\over{\sqrt{s}}}\|p_{i-1}\|_2\leq {1\over
20}{1\over{\sqrt{s}}}2\|p_0\|_2< {1\over 8}$.

\paragraph{The proof of $\left\|q_{B^c}\right\|_\infty\leq {\lambda\over 4}$}~\\
For $k=1,..,l$, we denote $A_{G_k,:}={1\over {\sqrt{m}}}\begin{bmatrix}a_{k_1}^*\\...\\ a_{k_{m_k}}^* \end{bmatrix}$, and $A_{B,:}={1\over {\sqrt{m}}}\begin{bmatrix}\tilde{a}_{1}^*\\...\\ \tilde{a}_{m_b}^* \end{bmatrix}$.
By (\ref{eq 3.2.13}), (\ref{eq 3.2.14}) and (\ref{eq 3.2.12}), it suffices to show that for any $1\leq k\leq l$ and $1\leq j \leq m_k$,
\[
\left|{{\sqrt{m}}\over {m_k}}(a_{k_j})_T^*\left(I-{m\over
{m_{k-1}}}A_{G_{k-1},T}^*A_{G_{k-1},T}\right)\cdots\left(I-{m\over
{m_1}}A_{G_1,T}^*A_{G_1,T}\right)\left(\sgn(x_T)-\lambda A_{B,T}^*\sgn(f_B)\right)\right| \leq {\lambda \over 4}.
\]
Set 
\begin{equation}
\label{eq 3.2.20}
w=\left(I-{m\over{m_1}}A_{G_1,T}^*A_{G_1,T}\right)\cdots\left(I-{m\over{m_{k-1}}}A_{G_{k-1},T}^*A_{G_{k-1},T}\right)(a_{k_j})_T.
\end{equation}
Then it suffices to prove
\[
\left|{{\sqrt{m}}\over {m_k}}w^*\left(\sgn(x_T)-\lambda A_{B,T}^*\sgn(f_B)\right)\right| \leq {\lambda \over 4}.
\]
Since $w$ and $\sgn(x_T)$ are independent, by Hoeffding's inequality and conditioning on w, we have $\mathbb{P}\left(\left|w^*\sgn(x_T)\right|\geq t\right)\leq 2 \exp\left(-{{2t^2}\over{4\|w\|_2^2}}\right)$ for any $t>0$. Then with high probability we have 
\begin{equation}
\label{ineq 3.2.21}
\left|w^*\sgn(x_T)\right|\leq C\sqrt{\log n}\|w\|_2
\end{equation} 
for some absolute constant $C$.\\
\\
Setting $z=\sgn(f_B)$, we have $w^*A_{B,T}^*\sgn(f_B)={1\over {\sqrt{m}}}\displaystyle\sum\limits_{i=1}^{m_b}\left[(\tilde{a}_i)_T^*w\right] z_{\{i\}}$. Since $w$, $A_{B,T}$ and $z$ are independent, conditioning on $w$ we have
\[
\mathbb{E}\{[(\tilde{a}_i)_T^*w]z_{\{i\}}\}=\mathbb{E}\{(\tilde{a}_i)_T^*w\} \mathbb{E}\{z(i)\}=0,
\]
\[
\left|[(\tilde{a}_i)_T^*w] z_{\{i\}}\right|\leq \|w\|_2\left\|(\tilde{a}_i)_T\right\|_2\leq \sqrt{s\mu}\|w\|_2\leq \sqrt{{\alpha m}\over{\log^2 n}}\|w\|_2,
\]
and
\[
\mathbb{E}\{ \left|[(\tilde{a}_i)_T^*w] z_{\{i\}}\right|^2 \}=\mathbb{E}\{ [w^*(\tilde{a}_i)_T][(\tilde{a}_i)_T^*w] \}
=w^*\mathbb{E}\{(\tilde{a}_i)_T(\tilde{a}_i)_T^* \}w=\|w\|_2^2.
\]
By Bernstein's inequality, we have 
\[
\mathbb{P}\left(\left|w^*A_{B,T}^*\sgn(f_B)\right|\geq {t\over {\sqrt{m}}}\right)\leq 2\exp\left(-{{t^2/2}\over{m_b\|w\|_2^2+\sqrt{{\alpha m}\over{\log^2 n}}\|w\|_2 t/3}}\right).
\]
By choosing some numerical constant $C$ and $t=C\sqrt{m\log n}\|w\|_2$, we have 
\begin{equation}
\label{ineq 3.2.22}
\left|w^*A_{B,T}^*\sgn(f_B)\right|\leq C\sqrt{\log n}\|w\|_2
\end{equation}
with high probability, provided $\alpha$ is sufficiently small. \\
\\
By (\ref{ineq 3.2.21}) and (\ref{ineq 3.2.22}), we have
\begin{equation}
\label{ineq 3.2.23}
\left|{{\sqrt{m}}\over {m_k}}w^*\left(\sgn(x_T)-\lambda A_{B,T}^*\sgn(f_B)\right)\right| \leq {{\sqrt{m}}\over {m_k}}C \sqrt{\log n}\|w\|_2,
\end{equation}
for some numerical constant $C$.\\
\\
When $k\geq 3$, by (\ref{eq 3.2.20}), (\ref{ineq 3.2.10}) and (\ref{ineq 3.2.11}), we have $\|w\|_2\leq ({1\over 2})^{k-1}{1\over{\log n}}\sqrt{\mu s}\leq{\sqrt{\alpha m}\over{\log^2 n}}$. Recalling ${m\over {m_k}}\leq C\log n$, by (\ref{ineq 3.2.23}), we have $\left|{{\sqrt{m}}\over {m_k}}w^*\left(\sgn(x_T)-\lambda A_{B,T}^*\sgn(f_B)\right)\right| \leq C\left({m\over {m_k}}\right)\sqrt{\alpha}(\log n)^{-3/2}\leq {\lambda \over 4}$ provided $\alpha$ is sufficiently small.\\
\\ 
When $k\leq 2$, by (\ref{eq 3.2.20}) and (\ref{ineq 3.2.10}), we have $\|w\|_2\leq \sqrt{\mu s}\leq {\sqrt{\alpha m}\over {\log n}}$. Recalling ${m\over {m_k}}\leq C$, by (\ref{ineq 3.2.23}), we have $\left|{{\sqrt{m}}\over {m_k}}w^*\left(\sgn(x_T)-\lambda A_{B,T}^*\sgn(f_B)\right)\right| \leq C\left({m\over {m_k}}\right)\sqrt{\alpha}(\log n)^{-1/2}\leq {\lambda \over 4}$ provided $\alpha$ is sufficiently small. 
\end{proof}
Here we would like to compare our golfing scheme with that in \cite{CP10}. There are mainly two differences. One is that we have an extra term $\lambda A_{B,:}^*\sgn(f_B)$ in the dual vector. To obtain the inequality $\|v_{T^c}\|_\infty\leq 1/4$, we propose to bound  $\|u_{T^c}\|_\infty$ and $\|\lambda A_{B,:}^*\sgn(f_B)\|_\infty$ respectively, and this will lead to the extra log factor compared with \cite{CP10}. Moreover, by using the golfing scheme to  construct the dual vector, we need to bound the term $\|q_{B^c}\|_\infty$, which is not necessary in \cite{CP10}. This inevitably incurs the random signs assumptions of the signal.

\section{A Proof of Theorem \ref{thm 1.3}}
In this section, the capital letters $X$, $Y$ etc represent matrices, and the symbols in script font $\mathcal{I}$, $\mathcal{P}_T$, etc represent linear operators from a matrix space to a matrix space. Moreover, for any $\Omega_0 \subset [n] \times [n]$ we have $\PP_{\Omega_0}M$ is to keep the entries of $M$ on the support $\Omega_0$ and to change other entries into zeros. For any $n \times n$ matrix $A$, denote by $\|A\|_F$, $\|A\|$, $\|A\|_\infty$ and $\|A\|_*$ respectively the Frobenius norm, operator norm (the largest singular value), the biggest magnitude of all elements, and  the nuclear norm(the sum of all singular values).\\
\\
Similarly to Section 3, instead of denoting them as $C_1$, $C_2$, ..., we just use $C$, whose values change from line to line. Also, we will use the phrase ``with high probability" to mean with probability at least $1-Cn^{-c}$, where $C>0$ is a numerical constant and $c=3, 4, \text{~or~}5$ depending on the context.\\

\subsection{A model equivalent to Model 3.1}
Model 3.1 is natural and used in \cite{CLMW}, but we will use the following equivalent model for the convenience of proof:

\paragraph{Model 3.2:}
1. Fix an $n$ by $n$ matrix $K$, whose entries are either $1$ or $-1$. \\
2. Define two independent random subsets of $[n] \times [n]$: $\Gamma' \sim \text{Ber}((1-2s)\rho)$ and $\Omega' \sim \text{Ber}({{2s\rho}\over{1-\rho+2s\rho}})$. Moreover, let $O:=\Gamma' \cup \Omega'$, which thus satisfies $O \sim \text{Ber}(\rho)$.\\
3. Define an $n \times n$ random matrix $W$ with independent entries $W_{ij}$ satisfying $\P(W_{ij}=1)=\P(W_{ij}=-1)={1\over 2}$.\\
4. Define $\Omega'' \subset \Omega'$: $\Omega'':=\{(i, j): (i, j)\in \Omega', W_{ij}=K_{ij}\}$.\\
5. Define $\Omega:=\Omega''/\Gamma'$, and $\Gamma:=O/\Omega$.\\
6. Let $S$ satisfy $\sgn(S):=\PP_{\Omega}(K)$.\\
\\
Obviously, in both Model 3.1 and Model 3.2 the whole setting is deterministic if we fix $(O, \Omega)$. Therefore, the probability of $(\hat{L}, \hat{S})=(L, S)$ is determined by the joint distribution of $(O, \Omega)$. It is not difficult to prove that the joint distributions of $(O, \Omega)$ in both models are the same. Indeed, in Model 3.1, we have that $(1_{\{(i, j)\in O\}}, 1_{\{(i, j)\in \Omega\}})$ are iid random vectors with the probability distribution $\P(1_{\{(i, j)\in O\}}=1)=\rho$, $\P(1_{\{(i, j)\in \Omega\}}=1|1_{\{(i, j)\in O\}}=1)=s$ and $\P(1_{\{(i, j)\in \Omega\}}=1|1_{\{(i, j)\in O\}}=0)=0$. In Model 3.2, we have 
\[
(1_{\{(i, j)\in O\}}, 1_{\{(i, j)\in \Omega\}})=(\max(1_{\{(i, j)\in \Gamma'\}}, 1_{\{(i, j)\in \Omega'\}}), 1_{\{(i, j)\in \Omega'\}}1_{\{W_{i,j}=K_{i,j}\}}1_{\{(i, j)\in \Gamma'^c\}}).
\]
This implies that $(1_{\{(i, j)\in O\}}, 1_{\{(i, j)\in \Omega\}})$ are independent random vectors. Moreover, it is easy to calculate that $\P(1_{\{(i, j)\in O\}}=1)=\rho$, $\P(1_{\{(i, j)\in \Omega\}}=1)=s\rho$ and $\P(1_{\{(i, j)\in \Omega\}}=1, 1_{\{(i, j)\in O\}}=0)=0$. Then we have 
\[
\P(1_{\{(i, j)\in \Omega\}}=1|1_{\{(i, j)\in O\}}=1)=\P(1_{\{(i, j)\in \Omega\}}=1, 1_{\{(i, j)\in O\}}=1)/\P(1_{\{(i, j)\in O\}}=1)=s,
\]
and
\[
\P(1_{\{(i, j)\in \Omega\}}=1|1_{\{(i, j)\in O\}}=0)=\P(1_{\{(i, j)\in \Omega\}}=1, 1_{\{(i, j)\in O\}}=0)/\P(1_{\{(i, j)\in O\}}=0)=0.
\]
Notice that although $(1_{\{(i, j)\in O\}}, 1_{\{(i, j)\in \Omega\}})$ depends on $K$, its distribution does not. By the above we know that $(O, \Omega)$ has the same distribution in both models. Therefore in the following we will use Model 3.2 instead. The advantage of using Model 3.2 is that we can utilize $\Gamma'$, $\Omega'$, $W$, etc. as auxiliaries.\\
\\
In the next section we prove some supporting lemmas which are useful for the proof of the main theorem.
\subsection{Supporting lemmas}
Define  $T:=\{UX^*+YV^*, X,Y \in \mathbb{R}^{n \times r}\}$ a subspace of $\mathbb{R}^{n \times n}$. Then the orthogonal projectors $\mathcal{P}_T$ and $\mathcal{P}_{T^\perp}$ in $\mathbb{R}^{n \times n}$   satisfy $\PT X=UU^*X+XVV^*-UU^*XVV^*$ and $\PTp X= (I-UU^*)X(I-VV^*)$ for any $X \in \R^{n\times n}$. This means $\|\PTp X\|\leq \|X\|$ for any
$X$. Recalling the incoherence conditions: for any $i \in \{1, ..., n\}$,
$\|UU^*e_i\|^2\leq {{\mu r}\over n}$ and $\left\|VV^*e_i\right\|^2\leq {{\mu r}\over n}$, we have  $\|\PT (e_ie_j^*)\|_\infty\leq {{2\mu r}\over n}$ and  $\|\PT (e_ie_j^*)\|_F \leq \sqrt{{2\mu r}\over n}$ \cite{CR09, CT10}.

\begin{lemma}(Theorem 4.1 of \cite{CR09})
\label{lemma 4.1} 
Suppose $\Omega_0 \sim \text{Ber}(\rho_0)$. Then with high probability, $\|\PT - \rho_0^{-1} \PT \POzero \PT \| \le \epsilon$,  provided that $\rho_0 \ge C_0 \, \epsilon^{-2} \, \frac{\mu r \log n}{n}$ for some numerical constant $C_0 > 0$.
\end{lemma}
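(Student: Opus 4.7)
The plan is to realize $\rho_0^{-1}\PT\POzero\PT - \PT$ as a sum of independent mean-zero self-adjoint random operators on $(\R^{n\times n}, \langle\cdot,\cdot\rangle_F)$ and then apply the operator Bernstein inequality (the same matrix Bernstein tool already invoked in the proof of Lemma \ref{lemma 3.1}).

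First I would write, for any $X\in\R^{n\times n}$,
\[
\PT\POzero\PT X \;=\; \sum_{(i,j)\in[n]\times[n]} \delta_{ij}\,\langle X,\PT(e_ie_j^*)\rangle\,\PT(e_ie_j^*),
\]
where $\delta_{ij}:=1_{\{(i,j)\in\Omega_0\}}$ are i.i.d.\ Bernoulli($\rho_0$). Since $\sum_{(i,j)}\PT(e_ie_j^*)\otimes\PT(e_ie_j^*)=\PT$ (easy to verify by expanding $\PT X = \sum_{(i,j)}\langle X,e_ie_j^*\rangle\PT(e_ie_j^*)$ and applying $\PT$ once more), taking expectations gives $\E[\PT\POzero\PT]=\rho_0\PT$. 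Therefore
\[
\rho_0^{-1}\PT\POzero\PT - \PT \;=\; \sum_{(i,j)} Z_{ij},\qquad Z_{ij}:=(\rho_0^{-1}\delta_{ij}-1)\,\PT(e_ie_j^*)\otimes\PT(e_ie_j^*),
\]
and the $Z_{ij}$ are independent, self-adjoint (in the Frobenius inner product), and mean-zero.

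Next I would estimate the two quantities needed by the operator Bernstein inequality, using the incoherence bound $\|\PT(e_ie_j^*)\|_F^2\le 2\mu r/n$ recalled just before the lemma. The uniform operator-norm bound is
\[
\|Z_{ij}\|\;\le\;\rho_0^{-1}\,\|\PT(e_ie_j^*)\|_F^{2}\;\le\;\tfrac{2\mu r}{\rho_0\,n}=:R,
\]
and for the variance, using $Z_{ij}^2=(\rho_0^{-1}\delta_{ij}-1)^2\|\PT(e_ie_j^*)\|_F^2\,(\PT(e_ie_j^*)\otimes\PT(e_ie_j^*))$ together with $\E[(\rho_0^{-1}\delta_{ij}-1)^2]\le \rho_0^{-1}$ and the identity $\sum_{(i,j)}\PT(e_ie_j^*)\otimes\PT(e_ie_j^*)=\PT$, one gets
\[
\Bigl\|\sum_{(i,j)}\E[Z_{ij}^2]\Bigr\|\;\le\;\rho_0^{-1}\max_{(i,j)}\|\PT(e_ie_j^*)\|_F^{2}\;\le\;\tfrac{2\mu r}{\rho_0\,n}=:\sigma^2.
\]

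Finally I would apply the operator Bernstein inequality on the $n^2$-dimensional ambient space of $n\times n$ matrices:
\[
\P\Bigl(\bigl\|\rho_0^{-1}\PT\POzero\PT-\PT\bigr\|\ge \epsilon\Bigr)\;\le\;2n^{2}\exp\!\left(-\,\frac{\epsilon^{2}/2}{\sigma^{2}+R\epsilon/3}\right).
\]
Plugging in $\sigma^2=R=2\mu r/(\rho_0 n)$ and demanding the bracketed exponent to exceed $c\log n$ (to absorb the $n^2$ prefactor with polynomial slack) reduces to $\rho_0\ge C_0\,\epsilon^{-2}\,\mu r\log n/n$, which is exactly the hypothesis. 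The main technical point — and the only mildly non-routine step — is identifying the right self-adjoint summands so that the dimension $n^2$ enters only through the logarithm, rather than naively treating $\POzero$ as a sum of ``scalar'' projections that would cost an extra factor and break the sharp $\mu r\log n/n$ threshold; once this decomposition is in place the rest is a direct application of matrix Bernstein.
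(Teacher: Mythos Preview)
Your argument is correct. The decomposition of $\rho_0^{-1}\PT\POzero\PT-\PT$ into the independent self-adjoint summands $Z_{ij}$, the bounds $R=\sigma^2=2\mu r/(\rho_0 n)$, and the final application of operator Bernstein all check out; the only cosmetic remark is that one may restrict to the subspace $T$ (dimension at most $2nr$) rather than the full $n^2$-dimensional ambient space, but this affects only the constant $C_0$.

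As for comparison with the paper: the paper does not give a proof of this lemma at all. It simply quotes the statement as Theorem~4.1 of \cite{CR09} and adds the remark that ``the original idea of the proof of this theorem is due to \cite{Rudelson99}.'' The original Cand\`es--Recht argument proceeds via Rudelson's selection lemma (a moment bound for sums of rank-one operators combined with symmetrization and a net/tail argument), which predates the now-standard matrix Bernstein machinery. Your route via operator Bernstein is the cleaner modern proof and is exactly in the spirit of what the paper already invokes for Lemma~\ref{lemma 3.1}; both approaches yield the same $\rho_0\gtrsim \epsilon^{-2}\mu r\log n/n$ threshold. So there is no discrepancy to flag---you have supplied a valid proof where the paper supplies only a citation.
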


The original idea of the proof of this theorem is due to \cite{Rudelson99}.

\begin{lemma}(Theorem 3.1 of \cite{CLMW})
\label{lemma 4.2} 
  Suppose $Z \in \text{Range} (\PT)$ is a fixed matrix, $\Omega_0 \sim \text{Ber}(\rho_0)$, and $\epsilon \leq 1$ is an arbitrary constant. Then with high probability $\|(\I - \rho_0^{-1}\PT\POzero) Z\|_\infty \le \epsilon \|Z\|_\infty$ provided that $\rho_0 \ge C_0 \, \epsilon^{-2} \, \frac{\mu r \log n}{n}$  for some numerical constant $C_0 > 0$.
\end{lemma}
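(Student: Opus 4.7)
The plan is to prove the bound entrywise via a scalar Bernstein inequality, then take a union bound over all $n^2$ positions. First I would expand the quantity of interest: using the identity $Z = \sum_{(i,j)} Z_{ij} e_i e_j^*$ together with $\mathbb{E}[\rho_0^{-1} \mathbf{1}_{\{(i,j)\in\Omega_0\}}] = 1$, write
\[
(\mathcal{I} - \rho_0^{-1}\PT\POzero) Z \;=\; \sum_{(i,j)} \bigl(1 - \rho_0^{-1}\delta_{ij}\bigr)\, Z_{ij}\, \PT(e_i e_j^*),
\]
where $\delta_{ij} := \mathbf{1}_{\{(i,j)\in\Omega_0\}}$ are independent Bernoulli$(\rho_0)$ variables. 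Fix $(a,b)$ and define the independent mean-zero scalars $X_{ij}^{(a,b)} := (1 - \rho_0^{-1}\delta_{ij})\, Z_{ij}\, \langle \PT(e_a e_b^*), e_i e_j^*\rangle$, whose sum is exactly the $(a,b)$ entry we wish to control (here I used self-adjointness of $\PT$ to move the projection onto $e_a e_b^*$).

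Next I would extract the two quantities Bernstein requires. For the uniform bound, $|1-\rho_0^{-1}\delta_{ij}| \le \rho_0^{-1}$, and the incoherence property $\|\PT(e_a e_b^*)\|_\infty \le 2\mu r/n$ gives
\[
|X_{ij}^{(a,b)}| \;\le\; \rho_0^{-1}\, \|Z\|_\infty \cdot \tfrac{2\mu r}{n}.
\]
For the variance, $\mathrm{Var}(1-\rho_0^{-1}\delta_{ij}) = (1-\rho_0)/\rho_0 \le \rho_0^{-1}$, and summing gives
\[
\sum_{(i,j)} \mathrm{Var}(X_{ij}^{(a,b)}) \;\le\; \rho_0^{-1}\, \|Z\|_\infty^2 \, \|\PT(e_a e_b^*)\|_F^2 \;\le\; \rho_0^{-1}\, \|Z\|_\infty^2 \cdot \tfrac{2\mu r}{n},
\]
using $\|\PT(e_a e_b^*)\|_F^2 \le 2\mu r/n$. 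Plugging these into the scalar Bernstein inequality with deviation $t = \epsilon\|Z\|_\infty$ yields a tail bound of the form $2\exp\bigl(-c\, \rho_0 n \epsilon^2/(\mu r)\bigr)$ for some absolute $c>0$, once the constant $\epsilon \le 1$ is absorbed in the $1+\epsilon/3$ denominator.

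Finally I would take a union bound over all $n^2$ index pairs $(a,b)$. Choosing the constant $C_0$ large enough that $\rho_0 \ge C_0 \epsilon^{-2} \mu r \log n / n$ makes the per-entry failure probability at most $n^{-\beta}$ for any prescribed $\beta$, whence the union bound yields the desired high-probability conclusion $\|(\mathcal{I} - \rho_0^{-1}\PT\POzero)Z\|_\infty \le \epsilon \|Z\|_\infty$. The only delicate step is keeping track of the two incoherence-driven bounds ($\ell^\infty$ for magnitude, $F$ for variance) on $\PT(e_a e_b^*)$; everything else is bookkeeping in a standard scalar Bernstein argument, so no genuine obstacle arises beyond verifying these projection estimates.
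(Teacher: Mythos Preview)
Your proposal is correct and is essentially the same argument as the original proof in \cite{CLMW}, which this paper merely cites without reproducing: an entrywise scalar Bernstein bound using the incoherence estimates $\|\PT(e_ae_b^*)\|_\infty \le 2\mu r/n$ and $\|\PT(e_ae_b^*)\|_F^2 \le 2\mu r/n$, followed by a union bound over the $n^2$ entries. There is nothing to add.
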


\begin{lemma}(Theorem 6.3 of \cite{CR09})
\label{lemma 4.3}
Suppose $Z$ is a fixed matrix, and $\Omega_0 \sim \text{Ber}(\rho_0)$. Then
with high probability, $\|(\rho_0\I - \POzero) Z\| \le C_0' \sqrt{n p\log n} \|Z\|_\infty$
provided that $\rho_0\leq p$ and $p \geq \, C_0\frac{\log n}{n}$ for some numerical constants $C_0 > 0$ and $C_0'>0$.
\end{lemma}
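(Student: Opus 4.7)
The plan is to express $(\mathcal{P}_{\Omega_0}-\rho_0\mathcal{I})Z$ as a sum of independent, mean-zero random matrices and then apply a matrix Bernstein inequality (as in Ahlswede--Winter or Tropp). Writing $\delta_{ij}=1_{\{(i,j)\in\Omega_0\}}$, we have
\[
Y:=(\mathcal{P}_{\Omega_0}-\rho_0\mathcal{I})Z=\sum_{i,j}(\delta_{ij}-\rho_0)Z_{ij}\,e_ie_j^*=:\sum_{i,j}Y_{ij},
\]
and the $Y_{ij}$ are independent with $\mathbb{E}Y_{ij}=0$. This reduces the spectral norm estimate to controlling the matrix variance proxy and a uniform size bound.

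First I would estimate the uniform bound: since $|\delta_{ij}-\rho_0|\le 1$ and $e_ie_j^*$ is a rank-one matrix of operator norm $1$, we have $R:=\max_{i,j}\|Y_{ij}\|\le \|Z\|_\infty$. Next I would compute the variance parameter. Since $\mathbb{E}(\delta_{ij}-\rho_0)^2=\rho_0(1-\rho_0)\le\rho_0\le p$, one gets
\[
\sum_{i,j}\mathbb{E}[Y_{ij}Y_{ij}^*]=\rho_0(1-\rho_0)\sum_{i}\Bigl(\sum_{j}Z_{ij}^2\Bigr)e_ie_i^*,
\]
a diagonal matrix whose operator norm is $\rho_0(1-\rho_0)\max_i\|Z_{i,:}\|_2^2\le p\,n\,\|Z\|_\infty^2$. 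The symmetric estimate gives the same bound for $\sum_{i,j}\mathbb{E}[Y_{ij}^*Y_{ij}]$. Thus the variance parameter is $\sigma^2\le pn\|Z\|_\infty^2$.

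Now I would invoke the matrix Bernstein inequality: for $t>0$,
\[
\mathbb{P}\bigl(\|Y\|\ge t\bigr)\le 2n\exp\!\left(-\frac{t^2/2}{\sigma^2+Rt/3}\right)\le 2n\exp\!\left(-\frac{t^2/2}{pn\|Z\|_\infty^2+\|Z\|_\infty t/3}\right).
\]
Choosing $t=C_0'\sqrt{np\log n}\,\|Z\|_\infty$ for a sufficiently large constant $C_0'$, the Gaussian term in the exponent gives $-C(C_0')^2\log n$, as long as the ``sub-Gaussian'' regime $\sigma^2\gtrsim Rt$ dominates, i.e.\ $pn\|Z\|_\infty^2\gtrsim \|Z\|_\infty\cdot\sqrt{np\log n}\|Z\|_\infty$, which simplifies to $p\gtrsim\log n/n$. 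This is precisely the hypothesis $p\ge C_0\log n/n$, so the stated bound $\|Y\|\le C_0'\sqrt{np\log n}\|Z\|_\infty$ holds with probability at least $1-2n^{1-c(C_0')^2}$, i.e.\ with high probability once $C_0'$ is chosen large enough.

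The main technical point (not really an obstacle) is the regime check: we need the hypothesis $p\ge C_0\log n/n$ exactly to guarantee that the variance proxy, and not the Bernstein correction term $Rt$, controls the tail at the chosen scale $t$. Everything else is a routine application of matrix Bernstein; the assumption $\rho_0\le p$ is used only to upper-bound $\rho_0(1-\rho_0)\le p$ in the variance computation so that the conclusion can be stated in terms of the (possibly larger) parameter $p$ rather than $\rho_0$ itself.
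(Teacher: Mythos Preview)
Your argument is correct. The decomposition $(\mathcal{P}_{\Omega_0}-\rho_0\mathcal{I})Z=\sum_{i,j}(\delta_{ij}-\rho_0)Z_{ij}e_ie_j^*$ into independent mean-zero rank-one summands, the variance computation $\bigl\|\sum_{i,j}\mathbb{E}[Y_{ij}Y_{ij}^*]\bigr\|\le p\,n\|Z\|_\infty^2$, the uniform bound $R\le\|Z\|_\infty$, and the regime check $p\ge C_0\log n/n$ ensuring that the sub-Gaussian tail dominates at scale $t=C_0'\sqrt{np\log n}\|Z\|_\infty$ are all sound.

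Your route, however, is not the paper's. The paper does not prove this lemma at all: it quotes Theorem~6.3 of \cite{CR09} verbatim and only remarks that replacing $\rho_0=p$ by $\rho_0\le p$ requires ``a very slight modification'' of Lemma~6.2 in that reference. The original Cand\`es--Recht argument behind Theorem~6.3 is a moment/decoupling computation (their Lemma~6.2) rather than a direct invocation of the matrix Bernstein inequality; the noncommutative Bernstein bounds in the form you use were not yet the standard tool when \cite{CR09} was written. Your observation that $\rho_0(1-\rho_0)\le\rho_0\le p$ is exactly the ``very slight modification'' the paper alludes to, but you make it explicit inside the variance estimate rather than inside an auxiliary moment lemma. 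The upshot is that your proof is self-contained and shorter, while the paper's treatment defers entirely to \cite{CR09}; both yield the same conclusion with the same dependence on $n$, $p$, and $\|Z\|_\infty$.
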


Notice that we only have $\rho_0=p$ in Theorem 6.3 of \cite{CR09}. By a very slight modification in the proof (specifically, the proof of Lemma 6.2) we can have $\rho_0\leq p$ as stated above. 

\subsection{A proof of Theorem 1.3}
By Lemma 3.1, we have we have $\|{1\over{(1-2s)\rho}}\PP_T \PP_{\Gamma'}\PP_T-\PP_T\|\leq {1\over 2}$ and $\|{1\over{\sqrt{(1-2s)\rho}}}\PP_T\PP_{\Gamma'}\|\leq \sqrt{3/2}$ with high probability provided $C_\rho$ is sufficiently large and $C_s$ is sufficiently small. We will assume both inequalities hold all through the paper.
\begin{theorem}
\label{teo:kktdg}
If there exists an $n \times n$ matrix $Y$ obeying
\begin{equation}
\label{eq:dual-certif1}
\begin{cases}
  \|\PT Y+\PT (\lambda \PP_{O/\Gamma'}W-UV^*)\|_F\le {\lambda\over{n^2}},\\
  \|\PTp Y+\PTp (\lambda \PP_{O/\Gamma'}W)\| \leq {1\over 4},\\
   \PP_{\Gamma'^c}Y=0,\\
  \|\PP_{\Gamma'} Y\|_\infty \le {\lambda \over 4},
\end{cases}
\end{equation}
where $\lambda={1\over{\sqrt{n\rho \log n}}}$. Then the solution $(\hat{L}, \hat{S})$ to \eqref{decoding 3} satisfies $(\hat{L}, \hat{S})=(L, S)$.
\end{theorem}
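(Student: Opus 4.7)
The plan is to adapt the standard inexact-duality argument for robust matrix completion, with the twist that the inexact certificate $Y$ lives only on $\Gamma'$ (a sub-sample of $\Gamma$) while the ``effective'' certificate must capture the signs of $S$ on $\Omega$ as well. First I would set $H := \hat L - L$ and $F := \hat S - S$; the feasibility constraint of \eqref{decoding 3} forces $F = -\PP_O H$ and $F$ is supported on $O = \Omega \cup \Gamma$. Combining the subgradient inequalities $\|\hat L\|_* \ge \|L\|_* + \langle UV^*, H\rangle + \|\PTp H\|_*$ and $\|\hat S\|_1 \ge \|S\|_1 + \langle \sgn(S), F\rangle + \|\PP_\Gamma F\|_1$ with optimality $\|\hat L\|_* + \lambda\|\hat S\|_1 \le \|L\|_* + \lambda\|S\|_1$ yields the master inequality
\[
\langle UV^*, H\rangle + \|\PTp H\|_* + \lambda\langle \sgn(S), F\rangle + \lambda\|\PP_\Gamma F\|_1 \le 0.
\]

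Next I would introduce the effective certificate $Z := Y + \lambda \PP_{O/\Gamma'}W$, which is supported on $O$. The first two hypotheses give $\PT Z = UV^* + R_1$ with $\|R_1\|_F \le \lambda/n^2$ and $\PTp Z = R_2$ with $\|R_2\| \le 1/4$. Model~3.2 (in particular $\Omega \subset \Omega''$, on which $W = K$) forces $\PP_\Omega Z = \lambda\sgn(S)$; the support restriction of $Y$ combined with $\|\PP_{\Gamma'} Y\|_\infty \le \lambda/4$ gives $\|\PP_{\Gamma'} Z\|_\infty \le \lambda/4$; and on $\Gamma/\Gamma'$ one has $\PP_{\Gamma/\Gamma'} Z = \lambda\,\PP_{\Gamma/\Gamma'} W$, of $\ell_\infty$-norm exactly $\lambda$. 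Using $\langle Z, H\rangle = \langle Z, \PP_O H\rangle = -\langle Z, F\rangle$ (valid because $Z$ is supported on $O$), I would replace $\langle UV^*, H\rangle + \lambda\langle \sgn(S), F\rangle$ in the master inequality by $-\langle \PP_\Gamma Z, F\rangle - \langle R_1, H\rangle - \langle R_2, H\rangle$.

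Splitting $\Gamma = \Gamma' \cup (\Gamma/\Gamma')$ in the pairing $\langle \PP_\Gamma Z, F\rangle$ and applying $\ell_\infty$-$\ell_1$ duality on each piece, the $\Gamma/\Gamma'$ contribution cancels exactly with the matching chunk of $\lambda\|\PP_\Gamma F\|_1$, while on $\Gamma'$ the strict $\lambda/4$ bound leaves a slack of $3\lambda/4$. Absorbing $\langle R_2, H\rangle$ into $(1/4)\|\PTp H\|_*$ via $R_2 \in T^\perp$, $\|R_2\|\le 1/4$, and $\langle R_1, H\rangle$ into $(\lambda/n^2)\|H\|_F$, the master inequality collapses to
\[
\tfrac{3}{4}\|\PTp H\|_* + \tfrac{3\lambda}{4}\|\PP_{\Gamma'} F\|_1 \le \tfrac{\lambda}{n^2}\|H\|_F.
\]

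To close the argument I would invoke the injectivity estimate $\|((1-2s)\rho)^{-1}\PT\PP_{\Gamma'}\PT - \PT\| \le 1/2$ already granted at the start of the section. Writing $\PP_{\Gamma'}\PT H = -\PP_{\Gamma'} F - \PP_{\Gamma'}\PTp H$ and using $\|\PP_{\Gamma'} F\|_F \le \|\PP_{\Gamma'} F\|_1$ together with $\|\PTp H\|_F \le \|\PTp H\|_*$, injectivity delivers
\[
\|\PT H\|_F \le \sqrt{\tfrac{2}{(1-2s)\rho}}\bigl(\|\PP_{\Gamma'} F\|_1 + \|\PTp H\|_*\bigr).
\]
Combining this with the previous display and the hypothesis $\rho \ge C_\rho \mu r\log^2 n/n$, both $\|\PT H\|_F$ and $\|\PTp H\|_F$ are bounded by $o(1)\|H\|_F$, so $\|H\|_F^2 = \|\PT H\|_F^2 + \|\PTp H\|_F^2$ forces $H = 0$ and hence $F = -\PP_O H = 0$. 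The main obstacle I anticipate is the exact numerical cancellation on $\Gamma/\Gamma'$: since $\|\PP_{\Gamma/\Gamma'} Z\|_\infty$ is exactly $\lambda$ rather than strictly less, one cannot extract control of $\|\PP_{\Gamma/\Gamma'} F\|_1$ from duality alone and must rely entirely on the strict $\lambda/4$ slack on $\Gamma'$ together with $\PP_{\Gamma'}\PT$-injectivity to drive the final contradiction; the rest is routine subgradient and duality bookkeeping.
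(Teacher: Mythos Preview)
Your proposal is correct and follows essentially the same inexact-duality route as the paper. The only cosmetic differences are that you package the certificate as $Z = Y + \lambda\PP_{O/\Gamma'}W$ and track $F=\hat S - S$ explicitly, whereas the paper manipulates the pieces directly (splitting $\|\hat S\|_1$ over $O/\Gamma'$ and $\Gamma'$ and noting $\PP_{\Gamma'}\hat S = -\PP_{\Gamma'}H$); and you close by bounding $\|\PT H\|_F, \|\PTp H\|_F$ against $\|H\|_F$ via Pythagoras, while the paper keeps $\|\PT H\|_F$ on the right (using $R_1\in T$) and substitutes the injectivity bound back in to obtain $\PTp H=\PP_{\Gamma'}H=0$ directly before concluding $\PT H=0$.
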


\begin{proof}
Set $H=\hat{L}-L$. The condition $\PP_O(L)+S=\PP_O(\hat{L})+\hat{S}$ implies that $\PP_O(H)=S-\hat{S}$. Then $\hat{S}$ is supported on $O$ because $S$ is supported on $\Omega \subset O$. By considering the subgradient of the nuclear norm at $L$, we have 
\[
\|\hat{L}\|_*\geq \|L\|_*+\langle \PP_T H, UV^*\rangle+\|\PP_{T^{\perp}}H\|_*.
\]
By the definition of $(\hat{L}, \hat{S})$, we have
\[
\|\hat{L}\|_*+\lambda\|\hat{S}\|_1\leq \|L\|_*+\lambda\|S\|_1.
\]
By the two inequalities above, we have
\[
\lambda\|S\|_1-\lambda\|\hat{S}\|_1\geq \langle \PP_T(H), UV^*\rangle+\|\PP_{T^{\perp}}H\|_*,
\]
which implies
\[
\lambda\|S\|_1-\lambda\|\PP_{O/\Gamma'}(\hat{S})\|_1\geq \langle H, UV^* \rangle+\|\PP_{T^\perp}(H)\|_*+\lambda\|\PP_{\Gamma'}(\hat{S})\|_1.
\]
On the other hand, 
\begin{align*}
\|\PP_{O/\Gamma'}\hat{S}\|_1
&=\|S+\PP_{O/\Gamma'}(-H)\|_1\\
&\geq\|S\|_1+\langle \sgn(S), \PP_{\Omega}(-H)\rangle+\|\PP_{O/(\Gamma'\cup\Omega)}(-H)\|_1\\
&\geq\|S\|_1+\langle \PP_{O/\Gamma'}(W), -H\rangle.
\end{align*}
By the two inequalities above and the fact $\PP_{\Gamma'}\hat{S}=\PP_{\Gamma'}(\hat{S}-S)=-\PP_{\Gamma'}H$, we have 
\begin{equation}
\label{4.2}
\|\PP_{T^\perp} (H)\|_*+\lambda\|\PP_{\Gamma'}(H)\|_1\leq \langle H, \lambda \PP_{O/\Gamma'}(W)-UV^*\rangle.
\end{equation}
By the assumptions of $Y$, we have
\begin{align*}
&\langle H, \lambda \PP_{O/\Gamma'}(W)-UV^*\rangle\\
&=\langle H, Y+\lambda \PP_{O/\Gamma'}(W)-UV^*\rangle-\langle H, Y\rangle\\
&=\langle \PP_T(H), \PP_T(Y+\lambda \PP_{O/\Gamma'}(W)-UV^*)\rangle+\langle \PP_{T^{\perp}}(H), \PP_{T^{\perp}}(Y+\lambda\PP_{O/\Gamma'}(W))\rangle\\
&-\langle \PP_{\Gamma'}(H), \PP_{\Gamma'}(Y)\rangle-\langle \PP_{\Gamma'^c}(H), \PP_{\Gamma'^c}(Y)\rangle\\
&\leq {\lambda \over {n^2}}\|\PP_T(H)\|_F+{1\over 4}\|\PP_{T^{\perp}}(H)\|_*+{\lambda \over 4}\|\PP_{\Gamma'}(H)\|_1.
\end{align*}
By inequality \ref{4.2}, 
\begin{equation}
\label{4.3}
{3\over 4}\|\PP_{T^\perp}(H)\|_*+{{3\lambda} \over 4}\|\PP_{\Gamma'}(H)\|_1\leq {\lambda \over {n^2}}\|\PP_{T}(H)\|_F.
\end{equation}
Recall that we assume $\|{1\over{(1-2s)\rho}}\PP_T \PP_{\Gamma'}\PP_T-\PP_T\|\leq {1\over 2}$ and $\|{1\over{\sqrt{(1-2s)\rho}}}\PP_T\PP_{\Gamma'}\|\leq \sqrt{3/2}$ all through the paper. Then
\begin{align*}
\|\PP_T(H)\|_F&\leq 2\|{1\over{(1-2s)\rho}}\PP_T \PP_{\Gamma'}\PP_T(H)\|_F\\
&\leq 2\|{1\over{(1-2s)\rho}}\PP_T \PP_{\Gamma'}\PP_{T^{\perp}}(H)\|_F+2\|{1\over{(1-2s)\rho}}\PP_T \PP_{\Gamma'}(H)\|_F\\
&\leq \sqrt{6\over{(1-2s)\rho}}\|\PP_{T^{\perp}}H\|_F+\sqrt{6\over{(1-2s)\rho}}\|\PP_{\Gamma'}H\|_F.
\end{align*}
By inequality \ref{4.3}, we have
\[
({3\over 4}-{\lambda\over{n^2}}\sqrt{6\over{(1-2s)\rho}})\|\PP_{T^{\perp}}(H)\|_F+({3\lambda\over 4}-{\lambda\over{n^2}}\sqrt{6\over{(1-2s)\rho}})\|\PP_{\Gamma'}H\|_F\leq 0.
\]
Then $\PP_{T^{\perp}}(H)=\PP_{\Gamma'}H=0$, which implies $\PP_{\Gamma'}\PP_{T}(H)=0$. Since $\PP_{\Gamma'}\PP_T$ is injective ($\|{1\over{(1-2s)\rho}}\PP_T \PP_{\Gamma'}\PP_T-\PP_T\|\leq {1\over 2}$) on $T$, we have $\PP_T(H)=0$. Then we have $H=0$.
\end{proof}

Suppose we can construct $Y$ and $\widetilde{Y}$ satisfying
\begin{equation}
\label{eq:dual-certif2}
\begin{cases}
  \|\PT Y+\PT (\lambda \PP_{\Omega'}W-UV^*)\|_F\le {\lambda\over{2n^2}},\\
  \|\PTp Y+\PTp (\lambda \PP_{\Omega'}W)\| \leq {1\over 4},\\
   \PP_{\Gamma'^c}Y=0,\\
  \|\PP_{\Gamma'} Y\|_\infty \le {\lambda \over 4}. 
\end{cases}
\end{equation}
and
\begin{equation}
\label{eq:dual-certif3}
\begin{cases}
  \|\PT \widetilde{Y}+\PT (\lambda (2\PP_{\Omega'/\Gamma'}(W)-\PP_{\Omega'}W)-UV^*)\|_F\le {\lambda\over{2n^2}},\\
  \|\PTp \widetilde{Y}+\PTp (\lambda (2\PP_{\Omega'/\Gamma'}(W)-\PP_{\Omega'}W))\| \leq {1\over 4},\\
   \PP_{\Gamma'^c}\widetilde{Y}=0,\\
  \|\PP_{\Gamma'} \widetilde{Y}\|_\infty \le {\lambda \over 4}. 
\end{cases}
\end{equation}
Then $\overline{Y}=(Y+\tilde{Y})/2$ will satisfy \ref{eq:dual-certif1}. By the assumptions in Model 2, $(\Gamma', \PP_{\Omega'}W)$ and $(\Gamma', 2\PP_{\Omega'/\Gamma'}(W)-\PP_{\Omega'}W)$ have the same distribution. Therefore, if we can construct $Y$ satisfying \eqref{eq:dual-certif2} with high probability, we can also construct $\widetilde{Y}$ satisfying 
\eqref{eq:dual-certif3} with high probability. Therefore to prove Theorem 1.3, we only need to prove that there exists $Y$ satisfying \eqref{eq:dual-certif2} with high probability:\\
\\
\begin{proof}(of Theorem \ref{thm 1.3})
Notice that $\Gamma' \sim \text{Ber}((1-2s)\rho)$. Suppose that $q$ satisfying $1-(1-2s)\rho=(1-{{(1-2s)\rho}\over 6})^2(1-q)^{l-2}$, where $l=\lfloor 5\log n+1 \rfloor$. This implies that $q\geq C\rho/\log(n)$. Define $q_1=q_2=(1-2s)\rho/6$ and $q_3=...=q_l=q$.
Then in distribution we can let $\Gamma'=\Gamma_1\cup...\cup\Gamma_l$, where $\Gamma_j \sim \text{Ber}(q_j)$ independently. \\
Construct
\begin{equation*}
\begin{cases} 
  Z_0=\PP_T(UV^*-\lambda \PP_{\Omega'}W),\\
  Z_j=(\PT-{1\over q_j}\PT \PGj \PT)Z_{j-1} \text{ for } j=1,..., j_0.,\\
  Y=\sum_{j=1}^l{1\over q_j}\PGj Z_{j-1},\\
\end{cases}
\end{equation*}
Then by Lemma \ref{lemma 4.1}, we have 
\[
\|Z_j\|_F \leq {1\over 2} \|Z_{j-1}\|_F \text{ for } j=1,...,l.
\]
with high probability provided $C_\rho$ is large enough and $C_s$ is small enough. Then $\|Z_j\|_F \leq ({1\over 2})^j\|Z_0\|_F$. By the construction of $Z_j$, we know that $Z_j \in \text{Range} (\PT)$ and $Z_j=(I-{1\over q}\PT\PGj)Z_{j-1}$. Then similarly, by Lemma \ref{lemma 4.2}, we have 
\[
\|Z_1\|_\infty \leq {1\over {2\sqrt{\log n}}} \|Z_0\|_\infty,
\]
and
\[
\|Z_j\|_\infty  \leq {1\over {2^j \log n}}\|Z_0\|_\infty \text{~~for~~}j=2,...,l
\]
with high probability provided $C_\rho$ is large enough and $C_s$ is small enough. Also, by Lemma \ref{lemma 4.3} we have 
\[
\|(I - {1\over q} \PGj) Z_{j-1}\| \leq C \sqrt{\frac{n\log n}{q}} \|Z_{j-1}\|_\infty \text{~~for~~} j=1,..., l
\]
with high probability provided $C_\rho$ is large enough and $C_s$ is small enough.\\
\\
We first bound $\|Z_0\|_F$ and $\|Z_0\|_\infty$. Obviously $\|Z_0\|_\infty\leq \|UV^*\|_\infty+\lambda \|\PT\PP_{\Omega'}(W)\|_\infty$. Recall that for any $i, j \in [n]$, we have $\|\PT (e_ie_j^*)\|_\infty\leq {{2\mu r}\over n}$ and  $\|\PT (e_ie_j^*)\|_F \leq \sqrt{{2\mu r}\over n}$. Moreover, $\PP_{\Omega'}(W)$ satisfies $(\PP_{\Omega'}(W))_{;}$ are iid random variables with the distribution
\[
(\PP_{\Omega'}(W))_{ij}=\left\{ \begin{array}{rcl} 1 & \mbox{with probability} & {{s\rho}\over{1-\rho+2s\rho}}\\ 0 & \mbox{with probability} & {{1-\rho}\over{1-\rho+2s\rho}} \\ -1 & \mbox{with probability} &  {{s\rho}\over{1-\rho+2s\rho}} \end{array}\right.
\]
Then by Bernstein's inequality, we have
\begin{align*}
\mathbb{P}\left(\left|\left\<\PT(\PP_{\Omega'}(W)), e_ie_j^*\right\>\right|\geq t\right)&=\mathbb{P}\left(\left|\left\<\PP_{\Omega'}(W), \PT(e_ie_j^*)\right\>\right|\geq t\right)\\
&\leq 2\exp(-{{t^2/2}\over{\sum EX_j^2+Mt/3}}),
\end{align*}
where we have
\[
\sum EX_j^2= {{2s\rho}\over{1-\rho+2s\rho}} \|\PT e_ie_j^*\|_F^2\leq C\rho s {{\mu r}\over {n}},
\]
and
\[
M=\|\PT e_ie_j^*\|_\infty \leq {{2\mu r}\over n}.
\]
Then with high probability we have $\|\PT \PP_{\Omega'}(W)\|_\infty\leq C\sqrt{\rho{{\mu r \log n }\over n}}(\geq C\sqrt{C_\rho{{\mu r \log^2 n}\over n}{{\mu r \log n }\over n}}>C\sqrt{C_\rho}M\log n)$. Then by $\|UV^*\|_\infty\leq {\sqrt{\mu r}\over n}$ we have $\|Z_0\|_\infty\leq C{\sqrt{\mu r}\over n}$, which implies $\|Z_0\|_F \leq  n\|Z_0\|_\infty \leq C{\sqrt{ \mu r}}$ .\\
\\
Now we want to prove $Y$ satisfies $\ref{eq:dual-certif2}$ with high probability. Obviously $\PP_{\Gamma'^c}Y=0$. It suffices to prove
\begin{equation}
\label{eq:dual-certif4}
\begin{cases}
  \|\PT Y+\PT (\lambda \PP_{\Omega'}(W)-UV^*)\|_F\le {\lambda\over{2n^2}},\\
  \|\PTp Y\| \leq {1\over 8},\\
  \|\PTp (\lambda \PP_{\Omega'}(W))\| \leq {1\over 8},\\
  \|\PP_{\Gamma'} Y\|_\infty \le {\lambda \over 4}. 
\end{cases}
\end{equation}
First,
\begin{eqnarray*}
\|\PT Y+\PT (\lambda \PP_{\Omega'}(W)-UV^*)\|_F &=& \|Z_0-(\sum_{j=1}^{l}{1\over q_j}\PT \PGj Z_{j-1})\|_F\\
                                   &=&\|\PT Z_0-(\sum_{j=1}^{l}{1\over q_j}\PT \PGj \PT Z_{j-1})\|_F\\
                                   &=&\|(\PT-{1\over q_1}\PT\PP_{\Gamma_1}\PT)Z_0-(\sum_{j=2}^{j_0}{1\over q_j}\PT \PGj \PT Z_{j-1})\|_F\\
                                   &=&\|\PT Z_1-(\sum_{j=2}^{l}{1\over q_j}\PT \PGj \PT Z_{j-1})\|_F\\
                                   &=&...=\|Z_{l}\|_F\ \leq C({1\over 2})^l\sqrt{\mu r}\leq{\lambda \over {n^2}}.
\end{eqnarray*}
Second,
\begin{eqnarray*}
\|\PTp Y\| &= & \|\PTp \sum_{j=1}^l{1\over q_j}\PGj Z_{j-1}\|\\
                      &\leq& \sum_{j=1}^l\|{1\over q_j}\PTp\PGj Z_{j-1}\| \\
                      &=&  \sum_{j=1}^l\|\PTp ({1\over q_j}\PGj Z_{j-1}-Z_{j-1})\| \\
                      &\leq & \sum_{j=1}^l\|{1\over q_j}\PGj Z_{j-1}-Z_{j-1}\| \\
                      &\leq & \sum_{j=1}^l C\sqrt{{{n \log n}\over q_j}}\|Z_{j-1}\|_\infty\\
                      &\leq & C\sqrt{{{n \log n}}}(\sum_{j=3}^l{1\over {2^{j-1} \log n{\sqrt{q_j}}}}+{1\over {2 \sqrt{\log n}{\sqrt{q_2}}}}+{1\over\sqrt{q_1}})\|Z_0\|_\infty\\
                     &\leq&C{\sqrt{n \mu r\log n }\over{n\sqrt{\rho}}}\leq{1\over {8\sqrt{\log n}}},
\end{eqnarray*}
provided $C_\rho$ is sufficiently large.\\
\\
Third, we have $\|\lambda\PTp\PP_{\Omega'}(W)\|\leq \lambda\|\PP_{\Omega'}(W)\|$. Notice that $W_{ij}$ is an independent Rademacher sequence independent of $\Omega'$. By Lemma \ref{lemma 4.3}, we have
\[
\|{{2s\rho}\over{1-\rho+2s\rho}}W-\PP_{\Omega'}(W)\|\leq C_0'\sqrt{np\log n}\|W\|_\infty
\]
with high probability provided ${{2s\rho}\over{1-\rho+2s\rho}}\leq p$ and $p\geq C_0 {{\log n}\over n}$. By Theorem 3.9 of \cite{Vershynin10}, we have $\|W\|_\infty\leq C_1\sqrt{n}$ with high probability. Therefore,
\[
\|\PP_{\Omega'}(W)\| \leq C_0'\sqrt{np\log n}+C_1\sqrt{n}{{2s\rho}\over{1-\rho+2s\rho}}.
\]
By choosing $p={\rho \over{C_2}}$ for some appropriate $C_2$, we have $\|\PP_{\Omega'}(W)\|\leq {\sqrt{n\rho\log n}\over 8}$, provided $C_\rho$ is large enough and $C_s$ is small enough.\\
\\
Fourth,
\begin{eqnarray*}
\|\PG Y\|_\infty&=&\|\PG \sum_j{1\over q_j}\PGj Z_{j-1}\|_\infty\\
                              &\leq& \sum_j{1\over q_j}\|Z_{j-1}\|_\infty\\
                              &\leq& (\sum_{j=3}^l {1\over q_j}{1\over {2^{j-1} \log n}}+{1\over q_2}{1\over {2 \sqrt{\log n}}}+{1\over q_1})\|Z_0\|_\infty\\
                              &\leq&C{\sqrt{\mu r}\over{n \rho}}\leq {{\lambda}\over{4\sqrt{\log n}}},
\end{eqnarray*}
provided $C_\rho$ is sufficiently large.
\end{proof}
Notice that in \cite{CLMW} the authors used a very similar golfing scheme. To compare these two methods, we use here a non-uniform sizes golfing scheme to achieve a result with fewer log factors. Moreover, unlike in \cite{CLMW} the authors used both golfing scheme and least square method to construct two parts of the dual matrix, here we only use golfing scheme. Actually the method to construct the dual matrix in \cite{CLMW} cannot be applied directly to our problem when $\rho=O(r\log^2 n/n)$.

\subsection*{Acknowledgements}
I am grateful to my Ph.~D.~advisor, Emmanuel Cand\`es, for his
encouragements and his help in preparing this manuscript.

\small 
\bibliographystyle{plain}
\bibliography{ref}

\end{document}